\documentclass[10pt,aps,prl,twocolumn]{revtex4-1}

\usepackage{mathrsfs}
\usepackage{bbm}
\usepackage{amsmath}
\usepackage{amssymb}
\usepackage{graphicx}
\usepackage{mathtools}

\makeatletter

\usepackage{hyperref}
\usepackage{color}
\definecolor{myblue}{RGB}{0,50,200}
\hypersetup{
colorlinks,
citecolor=myblue,
linkcolor=myblue,
urlcolor=myblue
}
\allowdisplaybreaks

\newcommand{\mca}{\mathcal}
\newcommand{\mbb}{\mathbb}

\newcommand{\msf}{\mathsf}
\newcommand{\mfr}{\mathfrak}
\newcommand{\mbm}[1]{\boldsymbol{#1}}
\newcommand{\avg}[1]{\langle #1\rangle}

\newcommand{\bra}[1]{\left( #1 \right)}
\newcommand{\bras}[1]{\left[ #1 \right]}
\newcommand{\brab}[1]{\left\{ #1 \right\}}
\newcommand{\pp}{\partial}
\newcommand{\Tr}[1]{{\rm tr}\left\{#1\right\}}

\newcommand{\Dr}[1]{\dot{#1}}
\newcommand{\peq}{p^{\rm eq}}

\begin{document}
\title{Geometrical Bounds of the Irreversibility in Markovian Systems}

\author{Tan Van Vu}
\email{tan@biom.t.u-tokyo.ac.jp}

\affiliation{Department of Information and Communication Engineering, Graduate
School of Information Science and Technology, The University of Tokyo,
Tokyo 113-8656, Japan}

\author{Yoshihiko Hasegawa}
\email{hasegawa@biom.t.u-tokyo.ac.jp}

\affiliation{Department of Information and Communication Engineering, Graduate
School of Information Science and Technology, The University of Tokyo,
Tokyo 113-8656, Japan}

\date{\today}

\begin{abstract}
We derive geometrical bounds on the irreversibility in both quantum and classical Markovian open systems that satisfy the detailed balance condition.
Using information geometry, we prove that irreversible entropy production is bounded from below by a modified Wasserstein distance between the initial and final states, thus strengthening the Clausius inequality in the reversible-Markov case.
The modified metric can be regarded as a discrete-state generalization of the Wasserstein metric, which has been used to bound dissipation in continuous-state Langevin systems.
Notably, the derived bounds can be interpreted as the quantum and classical speed limits, implying that the associated entropy production constrains the minimum time of transforming a system state.
We illustrate the results on several systems and show that a tighter bound than the Carnot bound for the efficiency of quantum heat engines can be obtained.
\end{abstract}

\pacs{}
\maketitle

\emph{Introduction.}---
Irreversibility, which is quantified by entropy production, is a fundamental concept in classical and quantum thermodynamics \cite{Seifert.2012.RPP,Vinjanampathy.2016.CP,Deffner.2019}.
Most macroscopic natural phenomena are irreversible, although their microscopic physical processes are generally time-symmetric.
According to the second law of thermodynamics, a system undergoing an irreversible process generates (on average) a positive entropy amount $\Delta S_{\rm tot}\ge 0$.
This bound can be saturated only when operations are performed in the infinite-time quasistatic limit.
However, as real processes must be completed in finite time, they are accompanied by a certain dissipation.
Tightening the lower bound on entropy production not only deepens our understanding of how much heat must be dissipated, but also provides insights into quantum technologies such as quantum computation \cite{Verstraete.2009.NP} and quantum heat engines \cite{Kosloff.2014.ARPC}.

In recent years, many studies have characterized the dissipation of thermodynamic processes using information geometry \cite{Salamon.1983.PRL,Ruppeiner.1995.RMP,Feng.2008.PRL,Machta.2015.PRL,Rotskoff.2017.PRE.GeoApp,Ito.2018.PRL,Nicholson.2018.PRE,Large.2018.EPL,Scandi.2019.Q,Cafaro.2020.PRE,Bryant.2020.PNAS}, which is the application of techniques from differential geometry to the manifolds of probability distributions and density matrices \cite{Amari.2000}.
Reference \cite{Deffner.2010.PRL} showed that entropy production in a closed driven quantum system is bounded from below by the Bures length between the final state and the corresponding equilibrium state.
Following a similar approach, Ref. \cite{Mancino.2018.PRL} determined a geometrical upper bound for the equilibration processes of open quantum systems.
As is well known, in classical systems near equilibrium, irreversible entropy production is related to the distance between thermodynamic states \cite{Crooks.2007.PRL,Sivak.2012.PRL}.
Meanwhile, a lower bound on dissipation in terms of the Wasserstein distance \cite{Villani.2008} has been defined for nonequilibrium Markovian systems described by Langevin equations \cite{Aurell.2011.PRL,Aurell.2012.JSP,Dechant.2019.arxiv}.
Information geometry is useful for deriving other important relations, such as speed limits \cite{Pires.2016.PRX,Deffner.2017.NJP,Ito.2020.PRX,Funo.2019.NJP}, quantum work fluctuation-dissipation relation \cite{Miller.2019.PRL}, and the efficiency-power tradeoff in microscopic heat engines \cite{Brandner.2020.PRL}.

In this Letter, we enlarge the family of these universal relations by investigating quantum and classical open systems that satisfy the detailed balance condition.
These systems obey microscopically reversible Markovian dynamics \cite{Crooks.1998.JSP} and can be modeled as coupled to an infinite thermal reservoir.
Examples include equilibration processes, which have received considerable interest in nonequilibrium physics \cite{Goold.2016.JPA,Deutsch.2018.RPP,Ptaszynski.2019.PRL,Shiraishi.2019.PRL}.
Specifically, we derive geometrical lower bounds on the entropy production in reversible Markovian systems described by master equations.
The spaces of quantum states and discrete distributions are treated as Riemannian manifolds, on which the time evolution of a system state is described by a smooth curve.
By defining a modified Wasserstein metric, we prove that the entropy production is bounded from below by the square of the geodesic distance between the initial and final states divided by the process time [cf. Eqs.~\eqref{eq:QuaWasDisBou} and \eqref{eq:ClaWasDisBou}].
The derived bounds strengthen the Clausius inequality of the second law for reversible Markovian systems.
They can also be regarded as generalizations of the bounds reported in Refs.~\cite{Aurell.2011.PRL,Dechant.2019.arxiv} to the discrete-state quantum and classical systems.
The equality of these bounds is attained only when the system dynamics follow the shortest paths.
Our modified metric is a quantum generalization of the Wasserstein metric, which measures the distance between two distributions and is widely used in optimal transport problems \cite{Villani.2008}.
Interestingly, the obtained inequalities can be interpreted as speed limits \cite{Mandelstam.1945.JP,Margolus.1998.PD,Campo.2013.PRL,Taddei.2013.PRL,Deffner.2017.JPA,Okuyama.2018.PRL,Shiraishi.2018.PRL,Shanahan.2018.PRL}, which establish the trade-off relations between the speed and dissipation cost of a state transformation.
We numerically illustrate the results on a quantum Otto engine and a classical two-level system.

\emph{Riemannian geometry.}---
First, we briefly describe some relevant concepts of Riemannian geometry.
Let $M$ be a smooth Riemannian manifold equipped with a metric $g_p$ on the tangent space at each point $p\in M$.
Note that there is an infinite number of such metrics, as long as the linearity, symmetry, and positive-definite conditions are met.
Notably, there exists a family of monotone metrics that are contractive under physical maps \cite{Morozova.1991.JSM,Petz.1996.LAA,Hiai.2009.LAA}, a representative of which is the Fisher information metric \cite{Wootters.1981.PRD,Braunstein.1994.PRL}.
In the quantum case, $M$ can be the space of density operators $\rho$, which are positive (i.e., $\rho\ge 0$) and have unit trace (i.e., ${\rm tr}\,\rho=1$).
Meanwhile, in classical discrete-state systems, $M$ can be the collection of discrete distributions $\mbm{p}=[p_1,\dots,p_N]^\top$, where $p_n\ge 0$ and $\sum_{n=1}^{N}p_n=1$.
The length of a smooth curve $\{\gamma(t)\}_{0\le t\le \tau}$ on the manifold can be defined as $\ell(\gamma)\coloneqq\int_0^\tau\sqrt{g_{\gamma}(\dot{\gamma},\dot{\gamma})}dt$, where the dot denotes a time derivative.
The geodesic distance between two points can be then defined as the minimum length over all smooth curves $\gamma$ connecting those points.
Throughout this Letter, we use the standard notation $\avg{\cdot,\cdot}$ of the scalar inner product, i.e., $\avg{\mbm{x},\mbm{y}}=\mbm{x}^\top\mbm{y}$ for the classical case and $\avg{X,Y}=\Tr{X^\dagger Y}$ for the quantum case.

\emph{Bounds in Markovian quantum systems.}---
We first consider an open quantum system that is weakly coupled to a heat bath at the inverse temperature $\beta$.
The time evolution of the density operator $\rho(t)$ of this system is described by the Lindblad master equation \cite{Lindblad.1976.CMP,Gorini.1976.JMP}:
\begin{equation}
\Dr{\rho}=\mca{L}(\rho)\coloneqq -i[H(t),\rho]+\mca{D}(\rho),\label{eq:LinEqu}
\end{equation}
where $\mca{L}$ is the Lindblad operator, $H(t)$ is the Hamiltonian, and $\mca{D}(\rho)$ is the dissipator given by $\mca{D}(\rho)\coloneqq\sum_{\mu,\omega}\alpha_\mu(\omega)\bras{2L_{\mu}(\omega)\rho L_{\mu}^\dagger(\omega)-\brab{L_{\mu}^\dagger(\omega)L_{\mu}(\omega),\rho}}$.
Here, $\brab{\cdot,\cdot}$ is the anti-commutator and $L_\mu(\omega)$ is a jump operator that satisfies $L_{\mu}^\dagger(\omega)=L_{\mu}(-\omega)$ and $[L_{\mu}(\omega),H]=\omega L_{\mu}(\omega)$.
Note that jump operators and coupling coefficients can be time-dependent, but we omit the time notation for simplicity.
We also assume that the detailed balance condition $\alpha_\mu(\omega)=e^{\beta\omega}\alpha_\mu(-\omega)$ are satisfied and the system is ergodic \cite{Lidar.2019.arxiv} (i.e., $[L_\mu(\omega),X]=0$ for all $\mu,\omega$ if and only if $X$ is proportional to the identity operator).
These assumptions are sufficient conditions for the Gibbs state $\rho^{\rm eq}(t)\coloneqq e^{-\beta H(t)}/Z_\beta(t)$ to be the instantaneous stationary state of the Lindblad master equation, i.e., $\mca{L}[\rho^{\rm eq}(t)]=0$ \cite{Alicki.1987,Strasberg.2017.PRX}, where $Z_\beta(t)$ is the partition function.

The entropy growth of the open system during time period $\tau$ is $\Delta S_{\rm tot}=\int_0^\tau\sigma_{\rm tot}(t) dt$, where $\sigma_{\rm tot}(t)=\Dr{S}+\beta\Dr{Q}$ is the entropy production rate \cite{Breuer.2002}.
Here, $\Dr{S}=-\Tr{\Dr{\rho}(t)\ln\rho(t)}$ denotes the von Neumann entropy flux of the system and $\Dr{Q}=-\Tr{H(t)\Dr{\rho}(t)}$ denotes the heat flux dissipated from the system to the bath.
The entropy production rate can be rewritten as $\sigma_{\rm tot}(t)=-\avg{\ln\rho(t)-\ln\rho^{\rm eq}(t),\Dr{\rho}(t)}=-\frac{d}{dt}S(\rho(t)||\rho^{\rm eq}(t))$, where $S(\rho_1 ||\rho_2)\coloneqq\Tr{\rho_1(\ln\rho_1-\ln\rho_2)}$ is the relative entropy of $\rho_1$ with respect to $\rho_2$, and the time derivative does not act on $\rho^{\rm eq}(t)$.
$\sigma_{\rm tot}(t)$ is non-negative because the relative entropy is monotonic under completely-positive trace-preserving maps; thereby, one can obtain the Clausius inequality $\Delta S_{\rm tot}\ge 0$.

We now construct an operator $\mca{K}_\rho$, and alternatively express the Lindblad master equation [Eq.~\eqref{eq:LinEqu}] in the form $\Dr{\rho}=\mca{K}_\rho\bra{-\ln\rho+\ln\rho^{\rm eq}}$ \cite{Supp.PhysRev}.
For an arbitrary density operator $\rho$, we define a tilted operator $[\rho]_\theta(X)\coloneqq e^{-\theta/2}\int_0^1e^{s\theta}\rho^sX\rho^{1-s}ds$, where $\theta$ is a real number.
Using this operator, $\mca{K}_\rho$ can be explicitly constructed as $\mca{K}_\rho(\psi)\coloneqq i\beta^{-1}[\psi,\rho]+\mca{O}_\rho(\psi)$. Here, $\mca{O}_\rho(\psi)\coloneqq\sum_{\mu,\omega}e^{-\beta\omega/2}\alpha_\mu(\omega)[L_{\mu}(\omega),[\rho]_{\beta\omega}([L_{\mu}^\dagger(\omega),\psi])]$ is a self-adjoint positive operator, which can be interpreted as a quantum analog of the Onsager matrix.
For an arbitrary smooth curve $\{\gamma(t)\}_{0\le t\le\tau}$, there exists a unique vector field of traceless self-adjoint operators $\{\nu(t)\}_{0\le t\le\tau}$ such that $\Dr{\gamma}(t)=\mca{K}_\gamma[\nu(t)]$ for all $t$.
Exploiting this representation, one can define a metric $g$ under which the gradient flow of the instantaneous relative entropy equals the flow associated with the system dynamics \cite{Jordan.1998.JMA,Maas.2011.JFA,Carlen.2017.JFA,Rouze.2019.JMP}.
Specifically, we define the metric $g_\gamma(\Dr{\gamma},\Dr{\gamma})=\avg{\nu,\mca{K}_\gamma(\nu)}$, which is always non-negative because $\avg{\nu,\mca{K}_\gamma(\nu)}=\avg{\nu,\mca{O}_\gamma(\nu)}\ge 0$.
Although the operator $\nu(t)$ is implicitly obtained from $\Dr{\gamma}(t)$, it can be regarded as the generalized thermodynamic force, and $g_\gamma(\Dr{\gamma},\Dr{\gamma})$ is the quantum dissipation function \cite{Onsager.1931.PR}.
This can be clarified as considering the path generated by the system dynamics, i.e., $\Dr{\rho}=\mca{K}_\rho(\phi)$ and $g_\rho(\Dr{\rho},\Dr{\rho})=\sigma_{\rm tot}(t)$, where $\phi=-(\ln\rho-\ln\rho^{\rm eq})+c$ is a traceless self-adjoint operator.
In addition to the thermodynamic length $\ell(\gamma)$, the thermodynamic divergence of a path, defined as \cite{Crooks.2007.PRL}
\begin{equation}
\ell_{\rm q}(\gamma)^2\coloneqq\tau\int_0^\tau g_\gamma(\Dr{\gamma},\Dr{\gamma})dt,
\end{equation}
is a measure of the dissipation along the path.
Note that by the Cauchy--Schwarz inequality, $\ell_{\rm q}(\gamma)\ge\ell(\gamma)$.
A modified Wasserstein distance between two states $\rho_0$ and $\rho_\tau$ can be defined as $\mca{W}_{\rm q}(\rho_0,\rho_\tau)\coloneqq\inf_\gamma\{\ell_{\rm q}(\gamma)\}$, where the infimum is taken over smooth curves with end points $\rho_0$ and $\rho_\tau$.
For relaxation processes, $\mca{W}_{\rm q}$ is exactly the geodesic distance induced by the defined metric \footnote{Since $\ell_{\rm q}(\gamma)\ge\ell(\gamma)$, we have $\mca{W}_{\rm q}(\rho_0,\rho_\tau)\ge\inf_\gamma\{\ell(\gamma)\}$ in the general case. However, for relaxation processes (i.e., the operator $\mca{K}_\gamma$ is time-independent), it can be shown that $\mca{W}_{\rm q}(\rho_0,\rho_\tau)=\inf_\gamma\{\ell(\gamma)\}$, where the equality is attained with a constant-speed path \cite{Dolbeault.2008.CVPD}.}.
It has been shown that a clear-cut definition of the quantum Wasserstein distance, by the direct generalization of the classical one, is not achievable \cite{Agredo.2017.STO}. Our generalization here is based on the Benamou--Brenier flow formulation of the original $L^2$-Wasserstein \cite{Benamou.2000.NM,Carlen.2017.JFA,Rouze.2019.JMP}. Other generalized metrics based on quantum couplings \cite{Golse.2016.CMP,Agredo.2017.STO,Palma.2019.arxiv} and the Kantorovich--Rubinstein duality \cite{Chen.2017.CSL} have also been proposed in the literature.
From the definition of $\mca{W}_{\rm q}$, the first main result is a geometrical lower bound of the entropy production:
\begin{equation}
\Delta S_{\rm tot}\ge\frac{\mca{W}_{\rm q}(\rho(0),\rho(\tau))^2}{\tau}.\label{eq:QuaWasDisBou}
\end{equation}
Inequality \eqref{eq:QuaWasDisBou} indicates that the irreversible entropy production is lower bounded by the distance between the initial and final states.
This bound is stronger than the conventional second law of thermodynamics; it can also be interpreted as a quantum speed limit, as it limits the time required to transform the system state.
The limit is governed by dissipation and the geometrical distance between states.
To generalize the result to the infinite-dimensional Hilbert space, the existence and the construction of the operator $\nu(t)$ in the definition of the metric must be clarified.
Since the distance $\mca{W}_{\rm q}$ is usually difficult to compute explicitly, we provide a lower bound of $\mca{W}_{\rm q}$ in terms of the trace-like distance $\msf{d}_{\rm T}(\rho_0,\rho_\tau)=\sum_{n=1}^{N}|a_n-b_n|$, where $\{a_n\}$ and $\{b_n\}$ are increasing eigenvalues of $\rho_0$ and $\rho_\tau$, respectively.
Specifically, we prove that $\mca{W}_{\rm q}(\rho_0,\rho_\tau)^2\ge \msf{d}_{\rm T}(\rho_0,\rho_\tau)^2/4\mca{A}_{\rm T}$ \cite{Supp.PhysRev}, where $\mca{A}_{\rm T}\coloneqq\tau^{-1}\int_0^\tau\sum_{\mu,\omega}\alpha_\mu(\omega)\|L_{\mu}(\omega)\|_\infty^2 dt$ characterizes the time scale of the quantum system and $\|X\|_\infty$ denotes the spectral norm of the operator $X$.
Note that this lower bound on $\mca{W}_{\rm q}$ is not invariant under the well-known unitary transformation of jump operators, because the conditions of jump operators uniquely determine the parameterization of the dynamics.
Consequently, the entropy production is also bounded from below by the trace-like distance between the initial and final states, given by
\begin{equation}
\Delta S_{\rm tot}\ge\frac{\msf{d}_{\rm T}(\rho(0),\rho(\tau))^2}{4\tau\mca{A}_{\rm T}}.\label{eq:QuaTotVarBou}
\end{equation}

The Hamiltonian and jump operators of a system must be time-independent in order to equilibrate with the environment and reach a steady state.
Thus, during equilibration, the entropy production can be bounded by the distance $\msf{d}_{\rm E}(\rho_0,\rho_\tau)=|\Tr{H(\rho_0-\rho_\tau)}|$ of the average energy change \cite{Supp.PhysRev},
\begin{equation}
\Delta S_{\rm tot}\ge\frac{\msf{d}_{\rm E}(\rho(0),\rho(\tau))^2}{\tau\mca{A}_{\rm E}},\label{eq:QuaEneBou}
\end{equation}
where $\mca{A}_{\rm E}\coloneqq\sum_{\mu,\omega}\alpha_\mu(\omega)\omega^2\|L_{\mu}(\omega)\|_\infty^2$.
A tighter bound in terms of the square of the heat current to the reservoir \cite{Shiraishi.2016.PRL} and another bound in terms of the change in entropy of the system can also be obtained \cite{Supp.PhysRev}. However, these bounds are not tight in the zero-temperature limit, as compared to the bound reported in Ref.~\cite{Timpanaro.2020.PRL}.
Inequalities \eqref{eq:QuaTotVarBou} and \eqref{eq:QuaEneBou} provide lower bounds not only on the entropy production, but also on the equilibration time, which is an essential quantity in quantum-state preparation \cite{Short.2012.NJP}, and which aids our understanding of thermalization \cite{Goold.2016.JPA}.
In applications, the equilibration time can be approximated without solving the Lindblad master equation, which may be time-consuming in the weak coupling limit.
The dissipation-current trade-off relation \cite{Tajima.2020.arxiv}, which unveils the role of coherence between energy eigenstates in realizing a dissipation-less heat current, can also be derived using our geometrical approach \cite{Supp.PhysRev}.

The system becomes classical when the initial density matrix has no coherence in the energy eigenbasis of the Hamiltonian.
In what follows, we present the analysis for classical systems.

\emph{Bounds in Markovian classical systems.}---
Next, we consider a discrete-state system in contact with a heat bath at the inverse temperature $\beta$.
During a time period $\tau$, stochastic transitions between the states are induced by interactions with the heat bath.
The dynamics obey a time-continuous Markov jump process and are described by the master equation:
\begin{equation}
\Dr{p}_n(t)=\sum_{m(\neq n)}\bras{R_{nm}(t)p_m(t)-R_{mn}(t)p_n(t)},\label{eq:ClaMasEqu}
\end{equation}
where $p_n(t)$ is the probability of finding the system in state $n$ at time $t$, and $R_{mn}(t)$ is the (possibly time-dependent) transition rate from state $n$ to state $m$ ($1\le n\neq m\le N$).
We assume an irreducible system in which the transition rates satisfy the detailed balance condition $R_{nm}(t)e^{-\beta\mca{E}_m(t)}=R_{mn}(t)e^{-\beta\mca{E}_n(t)}$ for all $m\neq n$, where $\mca{E}_n(t)$ is the instantaneous energy of state $n$ at time $t$.
Herein, we define the instantaneous equilibrium state $\mbm{p}^{\rm eq}(t)$ as $\peq_n(t)\propto e^{-\beta\mca{E}_n(t)}$.

Within the stochastic thermodynamics framework \cite{Seifert.2012.RPP}, the irreversible entropy production $\Delta S_{\rm tot}$ is quantified by the change in the system's Shannon entropy and the heat flow dissipated into the environment.
Specifically, $\Delta S_{\rm tot}=\int_{0}^{\tau}\sigma_{\rm tot}(t)dt$, where $\sigma_{\rm tot}(t)=\sigma(t)+\sigma_{\rm M}(t)$ is the total entropy production rate.
The terms $\sigma(t)=\sum_{m,n}R_{mn}p_n\ln(p_n/p_m)$ and $\sigma_{\rm M}(t)=\sum_{m,n}R_{mn}p_n\ln(R_{mn}/R_{nm})$ define the entropy production rates of the system and medium, respectively.
Under the detailed balance condition, the entropy production rate can be explicitly calculated as $\sigma_{\rm tot}(t)=\avg{\mbm{f}(t),\Dr{\mbm{p}}(t)}=-\frac{d}{dt}D(\mbm{p}(t)||\mbm{p}^{\rm eq}(t))$, where $\mbm{f}(t)\coloneqq -\nabla_pD(\mbm{p}(t)||\mbm{p}^{\rm eq}(t))$ is a vector of thermodynamic forces, and the time derivative does not act on $\mbm{p}^{\rm eq}(t)$.
Here, $D(\mbm{p}||\mbm{q})=\sum_np_n\ln(p_n/q_n)$ is the relative entropy between the distributions $\mbm{p}$ and $\mbm{q}$, and $\nabla_p\coloneqq [\pp_{p_1},\dots,\pp_{p_N}]^\top$ denotes the gradient with respect to $\mbm{p}$.
The second law of thermodynamics, $\Delta S_{\rm tot}\ge 0$, is affirmed from the positivity of the entropy production rate $\sigma_{\rm tot}(t)$.

The master equation [Eq.~\eqref{eq:ClaMasEqu}] can be alternatively written as $\Dr{\mbm{p}}(t)=\msf{K}_p(t)\mbm{f}(t)$ \cite{Supp.PhysRev}, where $\msf{K}_p(t)$ is a symmetric positive semi-definite matrix, given by
\begin{equation}
\msf{K}_p(t)\coloneqq\sum_{n<m}R_{nm}(t)\peq_m(t)\Phi\bra{\frac{p_n(t)}{\peq_n(t)},\frac{p_m(t)}{\peq_m(t)}}\msf{E}_{nm}.
\end{equation}
Here, $\Phi(x,y)=(x-y)/[\ln(x)-\ln(y)]$ is the logarithmic mean of $x,y>0$ and $\msf{E}_{nm}=[e_{ij}]\in\mbb{R}^{N\times N}$ is a matrix with $e_{nn}=e_{mm}=1$, $e_{nm}=e_{mn}=-1$, and zeros in all other elements.
The symmetric matrix $\msf{K}_p$ is actually the Onsager matrix \cite{Onsager.1931.PR}, which linearly relates the thermodynamic forces to the probability currents.
For an arbitrary smooth curve $\{\mbm{\gamma}(t)\}_{0\le t\le\tau}$, there exists a unique vector field $\{\mbm{v}(t)\}_{0\le t\le\tau}$ such that $\Dr{\mbm{\gamma}}(t)=\msf{K}_\gamma(t)\mbm{v}(t)$ and $\avg{\mbm{1},\mbm{v}(t)}=0$, where $\mbm{1}\coloneqq [1,\dots,1]^\top$ is an all-ones vector.
We can thus define the Riemannian metric $g_\gamma(\Dr{\mbm{\gamma}},\Dr{\mbm{\gamma}})=\avg{\mbm{v},\msf{K}_\gamma\mbm{v}}$, which is always non-negative.
Using this metric, the thermodynamic divergence of a curve can be defined as
\begin{equation}
\ell_{\rm c}(\mbm{\gamma})^2\coloneqq\tau\int_0^\tau g_\gamma(\Dr{\mbm{\gamma}},\Dr{\mbm{\gamma}})dt.
\end{equation}
The modified Wasserstein distance between two points $\mbm{p}_0$ and $\mbm{p}_\tau$ is then defined as $\mca{W}_{\rm c}(\mbm{p}_0,\mbm{p}_\tau)\coloneqq\inf_\gamma\brab{\ell_{\rm c}(\mbm{\gamma})}$, where the infimum is taken over all smooth curves connecting $\mbm{p}_0$ and $\mbm{p}_\tau$ on the manifold.
Notably, this distance is bounded from below by the total variation distance \cite{Supp.PhysRev}.
It is worth noting that the defined metric is not equivalent to the traditional discrete version of the classical Wasserstein metric.
In practice, $\mca{W}_{\rm c}$ can be numerically calculated by the geodesic equation \cite{Supp.PhysRev}, which computes the shortest path between two points.
Defining $\mbm{h}(t)\coloneqq\mbm{f}(t)-N^{-1}\avg{\mbm{1},\mbm{f}(t)}\mbm{1}$, one observes that $\Dr{\mbm{p}}(t)=\msf{K}_p(t)\mbm{h}(t)$ and $\avg{\mbm{1},\mbm{h}(t)}=0$.
As $\sigma_{\rm tot}(t)=\avg{\mbm{h}(t),\msf{K}_p(t)\mbm{h}(t)}$, $\tau\Delta S_{\rm tot}$ is exactly the thermodynamic divergence of the path described by the system dynamics.
As the second main result, we obtain the following bound:
\begin{equation}
\Delta S_{\rm tot}\ge\frac{\mca{W}_{\rm c}(\mbm{p}(0),\mbm{p}(\tau))^2}{\tau}.\label{eq:ClaWasDisBou}
\end{equation}
Inequality \eqref{eq:ClaWasDisBou} provides a stronger bound than the Clausius inequality of the second law, and is valid as long as the transition rates satisfy the detailed balance condition.
Geometrically, Eq.~\eqref{eq:ClaWasDisBou} can be considered as a discrete-state generalization of the relation between dissipation and the Wasserstein distance, which has been studied in continuous-state Markovian dynamics governed by Langevin equations \cite{Aurell.2011.PRL,Dechant.2019.arxiv}.
Concretely, Eq.~(21) in Ref.~\cite{Aurell.2011.PRL} and Eq.~(2) in Ref.~\cite{Dechant.2019.arxiv} are referred to as the continuum analogs of Eq.~\eqref{eq:ClaWasDisBou}.
Our generalization newly and appropriately connects these thermodynamic and geometric quantities in the discrete case.
Therefore, it is applicable to the many discrete physical phenomena in biological and quantum physics.

\emph{Examples.}---
First, we illustrate the bounds derived in Eqs.~\eqref{eq:QuaTotVarBou} and \eqref{eq:QuaEneBou} on a quantum Otto heat engine \cite{Abah.2012.PRL,Kosloff.2017.E,Kloc.2019.PRE}, which consists of a two-level atom with the Hamiltonian $H(t)=\omega(t)\sigma_z/2$.
This system is alternatively coupled to two heat baths at different inverse temperatures [one hot, one cold, $\beta_k=1/T_k~(k=h,c)$], and is cyclically operated through four steps as demonstrated in Fig.~\ref{fig:result}(a).
During adiabatic expansion (compression), the isolated system unitarily evolves during time $\tau_a$, and its frequency changes from $\omega_h\to\omega_c$ ($\omega_c\to\omega_h$).
The dynamics in each isochoric process $k=h,c$ are described by the Lindblad master equation \cite{Breuer.2002}:
\begin{equation}
\begin{aligned}
\Dr{\rho}=-&i[H_k,\rho]+\alpha_k\bar{n}(\omega_k)(2\sigma_+\rho\sigma_--\{\sigma_-\sigma_+,\rho\})\\
+&\alpha_k(\bar{n}(\omega_k)+1)(2\sigma_-\rho\sigma_+-\{\sigma_+\sigma_-,\rho\}),
\end{aligned}
\end{equation}
where the frequency is fixed at $\omega_k$, $\sigma_{\pm}=(\sigma_x\pm i\sigma_y)/2$, $\alpha_k$ is a positive damping rate, and $\bar{n}(\omega_k)=(e^{\beta_k\omega_k}-1)^{-1}$ is the Planck distribution.
The density operator $\rho$ in this thermalization process is analytically solvable \cite{Ramezani.2018.PRE} and the total entropy production can be explicitly evaluated as $\Delta S_{\rm tot}^k=S(\rho(0)||\rho^{\rm eq})-S(\rho(\tau_k)||\rho^{\rm eq})$, where $\tau_k$ denotes the process time.
Equations~\eqref{eq:QuaTotVarBou} and \eqref{eq:QuaEneBou} constrain $\Delta S_{\rm tot}^k$ within the distances $\msf{d}_{\rm T}$ and $\msf{d}_{\rm E}$, as numerically verified in Fig.~\ref{fig:result}(b).
Note that unlike the classical case \cite{Shiraishi.2019.PRL}, $\Delta S_{\rm tot}^k$ in generic thermalization processes is not bounded by the relative entropy $S(\rho(0)||\rho(\tau_k))$ \cite{Supp.PhysRev}.
\begin{figure}[t]
\centering
\includegraphics[width=1.0\linewidth]{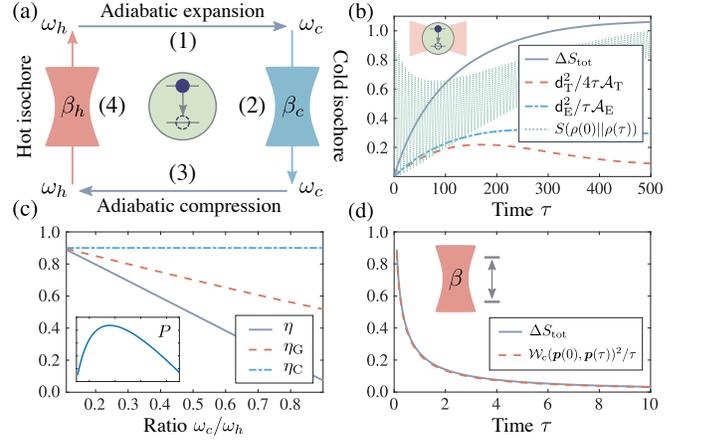}
\protect\caption{Numerical verification. (a) Quantum Otto engine: A two-level atom undergoes two isochoric and two adiabatic processes. (b) Thermalization process of the two-level atom. Plotted are $\Delta S_{\rm tot}$ (solid line), $\msf{d}_{\rm T}^2/4\tau\mca{A}_{\rm T}$ (dashed line), $\msf{d}_{\rm E}^2/\tau\mca{A}_{\rm E}$ (dash-dotted line), and $S(\rho(0)||\rho(\tau))$ (dotted line). Parameters are $\beta_k=1,\omega_k=1,\alpha_k=10^{-3}$, and $\rho(0)=(\mbb{I}_2+0.1\sigma_x-0.5\sigma_y+0.8\sigma_z)/2$. (c) Engine efficiency $\eta$ (solid line), Carnot efficiency $\eta_{\rm C}$ (dash-dotted line), and the derived efficiency bound $\eta_{\rm G}$ (dashed line), as functions of the cold-to-hot ratio of operating frequency. The inset plots the power output $P$ of the engine over the same frequency-ratio range. Parameters are $\beta_c=1,\beta_h=0.1,\alpha_h=\alpha_c=10^{-3}$, and $\tau_a=\tau_c=\tau_h=1$. (d) Classical two-level system. Plotted are $\Delta S_{\rm tot}$ (solid line) and $\mca{W}_{\rm c}(\mbm{p}(0),\mbm{p}(\tau))^2/\tau$ (dashed line). Parameters are fixed as $a=0.7,b=0.4$.}\label{fig:result}
\end{figure}

The total entropy production in each cycle is the sum of those in the hot and cold isochoric processes; that is, $\Delta S_{\rm tot}=\Delta S_{\rm tot}^h+\Delta S_{\rm tot}^c$.
Assuming a stationary-state system, let $Q_h$ and $Q_c$ denote the heat taken from the hot bath and the heat transferred to the cold bath, respectively.
From the inequality $\Delta S_{\rm tot}=\beta_hQ_h-\beta_cQ_c\ge 0$ imposed by the second law, one can prove that the engine efficiency cannot exceed the Carnot efficiency $\eta\coloneqq 1-{Q_c}/{Q_h}\le 1-{\beta_h}/{\beta_c}\eqqcolon\eta_{\rm C}$.
From the derived bounds, we can tighten the bound on the efficiency of the quantum Otto engine.
Applying Eqs.~\eqref{eq:QuaTotVarBou} and \eqref{eq:QuaEneBou} to isochoric processes, one readily obtains $\beta_hQ_h-\beta_cQ_c\ge\mfr{g}$, where $\mfr{g}\coloneqq\max\brab{\msf{d}_{\rm T}(\rho_1,\rho_4)^2/4\tau_h\mca{A}_{\rm T}^h,\msf{d}_{\rm E}(\rho_1,\rho_4)^2/\tau_h\mca{A}_{\rm E}^h}+\max\brab{\msf{d}_{\rm T}(\rho_2,\rho_3)^2/4\tau_c\mca{A}_{\rm T}^c,\msf{d}_{\rm E}(\rho_2,\rho_3)^2/\tau_c\mca{A}_{\rm E}^c}$.
Here, $\rho_i$ denotes the density operator at the beginning of process $i~(1\le i\le 4)$, $\mca{A}_{\rm T}^k\coloneqq\alpha_k(2\bar{n}(\omega_k)+1)$, and $\mca{A}_{\rm E}^k\coloneqq\omega_k^2\alpha_k(2\bar{n}(\omega_k)+1)$ for each $k=h,c$.
Consequently, the efficiency can be bounded from above as $\eta\le\eta_{\rm C}-{\mfr{g}}/{\beta_cQ_h}\eqqcolon\eta_{\rm G}$.
This bound is numerically verified in Fig.~\ref{fig:result}(c), which plots the efficiency against the $\omega_c/\omega_h$ ratio.

Next, we numerically verify the bound derived in Eq.~\eqref{eq:ClaWasDisBou} in a time-driven two-level classical system.
The instantaneous energies of states $1$ and $2$ are $\mca{E}_1(t)=\beta^{-1}\ln[(1-a+b(t+1)/\tau)/(a-bt/\tau)]$ and $\mca{E}_2(t)=0$, respectively, where $0<b<a<1$ are constants.
Their respective transition rates are $R_{12}(t)=1,R_{21}(t)=e^{\beta\mca{E}_1(t)}$.
The probability distribution and entropy production can be analytically calculated.
The entropy production and modified Wasserstein distance are plotted as functions of time $\tau$ in Fig.~\ref{fig:result}(d).
The entropy production at all times was tightly bounded from below by the distance $\mca{W}_{\rm c}$.
This result numerically verifies Eq.~\eqref{eq:ClaWasDisBou}.
As another example, the thermalization process of a three-level system is presented in Ref.~\cite{Supp.PhysRev}.

\emph{Conclusions.}---
In this Letter, we derived the geometrical bounds of irreversibility in both quantum and classical open systems, thus strengthening the Clausius inequality of the second law of thermodynamics.
Furthermore, the study results elucidate that, beyond the linear response regime, the entropy production can be geometrically characterized.
This finding sheds light on the problem of minimizing dissipation in discrete-state systems by methods of optimal control \cite{Aurell.2011.PRL}.
Interpreting the bounds as speed limits shows that the state-transformation speed is constrained by dissipation in quantum systems.
By investigating the information-geometric structure underlying the system dynamics, we lay the foundations for obtaining useful thermodynamic relations.
Exploring analogous bounds in generic systems, which violate the detailed balance condition, and for higher cumulants of dissipation \cite{Scandi.2020.PRR,Miller.2020.PRL}, would be promising research directions.

\begin{acknowledgments}
We thank Shin-ichi Sasa, Keiji Saito, Sosuke Ito, and Andreas Dechant for fruitful discussions.
We are also grateful to Keiji Saito for the careful reading of the manuscript and for the valuable comments.
This work was supported by Ministry of Education, Culture, Sports, Science and Technology (MEXT) KAKENHI Grant No. JP19K12153.
\end{acknowledgments}

\end{document}


\title{Supplemental Material for \\ ``Geometrical Bounds of the Irreversibility in Markovian Systems''}

\author{Tan Van Vu}

\affiliation{Department of Information and Communication Engineering, Graduate
	School of Information Science and Technology, The University of Tokyo,
	Tokyo 113-8656, Japan}

\author{Yoshihiko Hasegawa}

\affiliation{Department of Information and Communication Engineering, Graduate
	School of Information Science and Technology, The University of Tokyo,
	Tokyo 113-8656, Japan}

\pacs{}
\maketitle

This supplemental material describes the details of calculations introduced in the main text. 
The equations and figure numbers are prefixed with S [e.g., Eq.~(S1) or Fig.~S1]. 
Numbers without this prefix [e.g., Eq.~(1) or Fig.~1] refer to items in the main text.

\tableofcontents
\vspace{1cm}

\vspace{0.5cm}

Hereafter, we denote by $\mfr{L}(\mca{H})$ and $\mfr{H}(\mca{H})$ the sets of linear and self-adjoint operators, respectively, on a complex Hilbert space $\mca{H}$ with dimension $N>0$.
The inner product $\avg{\cdot,\cdot}$ is defined as $\avg{\mbm{x},\mbm{y}}=\mbm{x}^\top\mbm{y}$ for $\mbm{x},\mbm{y}\in\mbb{R}^{N\times 1}$ (classical case) and $\avg{X,Y}=\Tr{X^\dagger Y}$ for $X,Y\in\mfr{L}(\mca{H})$ (quantum case).

\section{Open quantum systems}
\subsection{Alternative expression of the Lindblad master equation}
Here we show that the Lindblad master equation can be written as
\begin{equation}
\Dr{\rho}=\mca{K}_\rho(-\ln\rho+\ln\rho^{\rm eq}),
\end{equation}
where $\mca{K}_\rho:\nu\mapsto i\beta^{-1}[\nu,\rho]+\mca{O}_\rho(\nu)$.
The operator $\mca{O}_\rho$ is defined by
\begin{equation}
\mca{O}_\rho(\nu):=\sum_{\mu,\omega}e^{-\beta\omega/2}\alpha_\mu(\omega)[L_{\mu}(\omega),[\rho]_{\beta\omega}([L_{\mu}^\dagger(\omega),\nu])].
\end{equation}
For any density operator $\rho=\sum_np_n\Mat{n}$, where $\sum_np_n=1$ and $\{\Kt{n}\}_n$ are orthonormal eigenvectors, we can express the tilted operator as
\begin{equation}
[\rho]_\theta(X)=e^{-\theta/2}\int_0^1e^{s\theta}\rho^sX\rho^{1-s}ds=\sum_{n,m}\Phi(e^{\theta/2}p_n,e^{-\theta/2}p_m)\Br{n}X\Kt{m}\Kt{n}\Br{m}.\label{eq:tilted.operator}
\end{equation}
Here, $\Phi(x,y)$ is the logarithmic mean of two positive numbers $x$ and $y$, given by $\Phi(x,y)=(x-y)/[\ln(x)-\ln(y)]$ for $x\neq y$ and $\Phi(x,x)=x$.
As $\rho^{\rm eq}=e^{-\beta H}/Z_\beta$ and $[\ln\rho,\rho]=0$, we have $i\beta^{-1}[-\ln\rho+\ln\rho^{\rm eq},\rho]=-i[H,\rho]$.
Thus, we need only show that
\begin{equation}
\mca{O}_\rho(-\ln\rho+\ln\rho^{\rm eq})=\sum_{\mu,\omega}\alpha_\mu(\omega)\bras{2L_\mu(\omega)\rho L_\mu^\dagger(\omega)-\brab{L_\mu^\dagger(\omega)L_\mu(\omega),\rho}}.\label{eq:operator.O}
\end{equation}
To this end, we first show that $[\rho]_{\theta}([X,\ln\rho]-\theta X)=e^{-\theta/2}X\rho-e^{\theta/2}\rho X$ for an arbitrary operator $X\in\mfr{L}(\mca{H})$ and $\theta\in\mbb{R}$.
This can be achieved through the following transformation:
\begin{subequations}
\begin{align}
[\rho]_{\theta}([X,\ln\rho]-\theta X)&=e^{-\theta/2}\int_0^1e^{\theta s}e^{s\ln\rho}(X\ln\rho-\ln\rho X-\theta X)e^{(1-s)\ln\rho}ds\\
&=-e^{-\theta/2}\int_0^1\bras{e^{\theta s}e^{s\ln\rho}(\ln\rho+\theta)Xe^{(1-s)\ln\rho}+e^{\theta s}e^{s\ln\rho}X(-\ln\rho)e^{(1-s)\ln\rho}}ds\\
&=-e^{-\theta/2}\int_0^1\frac{d}{ds}\bras{e^{(\ln\rho+\theta)s}Xe^{(1-s)\ln\rho}}ds\\
&=e^{-\theta/2}(Xe^{\ln\rho}-e^{\ln\rho+\theta}X)\\
&=e^{-\theta/2}X\rho-e^{\theta/2}\rho X.
\end{align}
\end{subequations}
Next, applying the relation $[L_{\mu}^\dagger(\omega),H]=-\omega L_{\mu}^\dagger(\omega)$, one immediately obtains
\begin{subequations}
\begin{align}
[\rho]_{\beta\omega}([L_{\mu}^\dagger(\omega),-\ln\rho+\ln\rho^{\rm eq}])&=[\rho]_{\beta\omega}([L_{\mu}^\dagger(\omega),-\ln\rho-\beta H])\\
&=-[\rho]_{\beta\omega}([L_{\mu}^\dagger(\omega),\ln\rho]+\beta[L_{\mu}^\dagger(\omega),H])\\
&=-[\rho]_{\beta\omega}([L_{\mu}^\dagger(\omega),\ln\rho]-\beta\omega L_{\mu}^\dagger(\omega))\\
&=e^{\beta\omega/2}\rho L_{\mu}^\dagger(\omega)-e^{-\beta\omega/2}L_{\mu}^\dagger(\omega)\rho.
\end{align}
\end{subequations}
Consequently, as $L_{\mu}^\dagger(\omega)=L_{\mu}(-\omega)$ and $\alpha_\mu(\omega)=e^{\beta\omega}\alpha_\mu(-\omega)$, one can verify Eq.~\eqref{eq:operator.O} as follows:
\begin{subequations}
\begin{align}
&\mca{O}_\rho(-\ln\rho+\ln\rho^{\rm eq})\\
&=\sum_{\mu,\omega}e^{-\beta\omega/2}\alpha_\mu(\omega)[L_{\mu}(\omega),[\rho]_{\beta\omega}([L_{\mu}^\dagger(\omega),-\ln\rho+\ln\rho^{\rm eq}])]\\
&=\sum_{\mu,\omega}e^{-\beta\omega/2}\alpha_\mu(\omega)[L_{\mu}(\omega),e^{\beta\omega/2}\rho L_{\mu}^\dagger(\omega)-e^{-\beta\omega/2}L_{\mu}^\dagger(\omega)\rho]\\
&=\sum_{\mu,\omega}\alpha_\mu(\omega)\bras{-e^{-\beta\omega}L_{\mu}(\omega)L_{\mu}^\dagger(\omega)\rho+L_{\mu}(\omega)\rho L_{\mu}^\dagger(\omega)+e^{-\beta\omega}L_{\mu}^\dagger(\omega)\rho L_{\mu}(\omega)-\rho L_{\mu}^\dagger(\omega)L_{\mu}(\omega)}\\
&=\sum_{\mu,\omega}\brab{\alpha_\mu(\omega)\bras{L_{\mu}(\omega)\rho L_{\mu}^\dagger(\omega)-\rho L_{\mu}^\dagger(\omega)L_{\mu}(\omega)}+\alpha_\mu(-\omega)\bras{L_{\mu}(-\omega)\rho L_{\mu}^\dagger(-\omega)-L_{\mu}^\dagger(-\omega)L_{\mu}(-\omega)\rho}}\\
&=\sum_{\mu,\omega}\alpha_\mu(\omega)\bras{2L_\mu(\omega)\rho L_\mu^\dagger(\omega)-\brab{L_\mu^\dagger(\omega)L_\mu(\omega),\rho}}.
\end{align}
\end{subequations}

\subsection{Properties of the quantum Wasserstein metric}
Here we provide several useful properties regarding the metric defined in the main text.
\begin{lemma}
The inner product $\avg{\cdot,\mca{O}_\rho(\cdot)}$ satisfies the conjugate-symmetry condition $\avg{\xi,\mca{O}_\rho(\nu)}=\avg{\nu,\mca{O}_\rho(\xi)}^*$ for all operators $\nu,\xi\in\mfr{L}(\mca{H})$.
Here, $*$ denotes the complex conjugate.
\end{lemma}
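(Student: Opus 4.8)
The plan is to recognize that the claimed conjugate-symmetry is nothing but the statement that the superoperator $\mca{O}_\rho$ is self-adjoint with respect to the Hilbert--Schmidt inner product $\avg{X,Y}=\Tr{X^\dagger Y}$. Since the inner product itself satisfies $\avg{A,B}^*=\avg{B,A}$, the right-hand side can be rewritten as $\avg{\nu,\mca{O}_\rho(\xi)}^*=\avg{\mca{O}_\rho(\xi),\nu}$, so it suffices to prove $\avg{\xi,\mca{O}_\rho(\nu)}=\avg{\mca{O}_\rho(\xi),\nu}$ for all $\nu,\xi\in\mfr{L}(\mca{H})$. Because $\mca{O}_\rho$ is a finite sum of terms of the form $c_{\mu\omega}\,[L_\mu(\omega),[\rho]_{\beta\omega}([L_\mu^\dagger(\omega),\,\cdot\,])]$ with coefficient $c_{\mu\omega}:=e^{-\beta\omega/2}\alpha_\mu(\omega)$, I would establish self-adjointness of a single term and then sum.

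The first key step is to show that the tilted operator $[\rho]_\theta$ is self-adjoint for every real $\theta$. I would verify this directly from the integral representation in Eq.~\eqref{eq:tilted.operator}, using cyclicity of the trace together with the self-adjointness of $\rho^s$:
\begin{equation}
\avg{\xi,[\rho]_\theta(\nu)}=e^{-\theta/2}\int_0^1 e^{s\theta}\Tr{\xi^\dagger\rho^s\nu\rho^{1-s}}\,ds=e^{-\theta/2}\int_0^1 e^{s\theta}\Tr{(\rho^s\xi\rho^{1-s})^\dagger\nu}\,ds=\avg{[\rho]_\theta(\xi),\nu}.
\end{equation}
(Equivalently, in the eigenbasis of $\rho$ the map acts as a Hadamard multiplier with the real---hence self-adjoint---symbol $\Phi(e^{\theta/2}p_n,e^{-\theta/2}p_m)$; note that self-adjointness here rests on reality of the symbol, not on its symmetry under $n\leftrightarrow m$.) The second step is the elementary pair of adjoint identities for the commutator maps, $\avg{\xi,[L,Z]}=\avg{[L^\dagger,\xi],Z}$ and $\avg{W,[L^\dagger,\nu]}=\avg{[L,W],\nu}$, both immediate from cyclicity of the trace.

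Combining these, I would transfer each factor across the inner product---first the outer commutator, then the self-adjoint $[\rho]_{\beta\omega}$, then the inner commutator---to obtain, for $L=L_\mu(\omega)$,
\begin{equation}
\avg{\xi,c_{\mu\omega}[L,[\rho]_{\beta\omega}([L^\dagger,\nu])]}=c_{\mu\omega}\avg{[L,[\rho]_{\beta\omega}([L^\dagger,\xi])],\nu}.
\end{equation}
Since $c_{\mu\omega}$ is real, it can be reabsorbed into the first slot, so the right-hand side equals $\avg{c_{\mu\omega}[L,[\rho]_{\beta\omega}([L^\dagger,\xi])],\nu}$; summing over $\mu,\omega$ then yields $\avg{\xi,\mca{O}_\rho(\nu)}=\avg{\mca{O}_\rho(\xi),\nu}$, which is the claim.

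The only genuine subtlety---and the single place where the structure of the problem really enters---is the reality of the coefficients $c_{\mu\omega}=e^{-\beta\omega/2}\alpha_\mu(\omega)$. This holds because the rates $\alpha_\mu(\omega)$, being Fourier transforms of the bath correlation functions, are real and nonnegative (the same property underlying the detailed-balance relation $\alpha_\mu(\omega)=e^{\beta\omega}\alpha_\mu(-\omega)$ used earlier); were they complex, each term would instead obey $T^\dagger=(c_{\mu\omega}^*/c_{\mu\omega})T$ and self-adjointness would fail. I therefore expect the write-up to be short: once the self-adjointness of $[\rho]_\theta$ is secured, everything else is routine bookkeeping with adjoints of commutators.
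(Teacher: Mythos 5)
Your proof is correct, but it follows a genuinely different route from the paper's. The paper works entirely in the eigenbasis of $\rho$: it inserts the spectral representation of the tilted operator with the logarithmic-mean kernel $\Phi(e^{\theta/2}p_n,e^{-\theta/2}p_m)$, rewrites $\avg{\xi,[X,[\rho]_{\theta}([X^\dagger,\nu])]}$ as a sum of matrix elements, and then obtains conjugate symmetry by swapping $\xi\leftrightarrow\nu$, conjugating, and using $[X^\dagger,\xi]^\dagger=[\xi^\dagger,X]$ together with the reality of $\Phi$. You instead argue basis-free: you factor each term of $\mca{O}_\rho$ into the maps $[L,\cdot\,]$, $[\rho]_{\beta\omega}$, and $[L^\dagger,\cdot\,]$, note that the two commutator maps are Hilbert--Schmidt adjoints of each other, prove self-adjointness of $[\rho]_\theta$ directly from the integral representation by cyclicity of the trace, and compose. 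Both arguments ultimately rest on the same two facts --- reality of the multiplier defining $[\rho]_\theta$ and reality of the coefficients $e^{-\beta\omega/2}\alpha_\mu(\omega)$ --- and your write-up has the merit of making the second fact explicit, whereas the paper uses it silently when it passes from the single-term identity to the sum defining $\mca{O}_\rho$ (conjugation produces $c_{\mu\omega}^*$, so real rates are indeed needed there too). What the paper's matrix-element computation buys in exchange is reuse: the very same display is invoked immediately after the lemma to conclude
\begin{equation}
\avg{\xi,[X,[\rho]_{\theta}([X^\dagger,\xi])]}=\sum_{n,m}\Phi(e^{\theta/2}p_n,e^{-\theta/2}p_m)\,|\Br{n}[X^\dagger,\xi]\Kt{m}|^2\ge 0
\end{equation}
together with its equality condition, which your composition argument does not directly deliver (though it could be recovered basis-free from $\Tr{W^\dagger\rho^s W\rho^{1-s}}=\Tr{(\rho^{s/2}W\rho^{(1-s)/2})^\dagger(\rho^{s/2}W\rho^{(1-s)/2})}\ge 0$).
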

\begin{proof}
For an arbitrary operator $X\in\mfr{L}(\mca{H})$ and $\theta\in\mbb{R}$, we have
\begin{subequations}
\begin{align}
\avg{\xi,[X,[\rho]_{\theta}([X^\dagger,\nu])]}&=\Tr{\xi^\dagger[X,[\rho]_{\theta}([X^\dagger,\nu])]}\\
&=\Tr{[\xi^\dagger,X][\rho]_{\theta}([X^\dagger,\nu])}\\
&=\sum_{n,m}\Phi(e^{\theta/2}p_n,e^{-\theta/2}p_m)\Br{n}[X^\dagger,\nu]\Kt{m}\Br{m}[\xi^\dagger,X]\Kt{n},\label{eq:prop1}
\end{align}
\end{subequations}
where we have used Eq.~\eqref{eq:tilted.operator} in Eq.~\eqref{eq:prop1}.
Swapping $\xi$ and $\nu$, one obtains
\begin{subequations}
\begin{align}
\avg{\nu,[X,[\rho]_{\theta}([X^\dagger,\xi])]}^*&=\sum_{n,m}\Phi(e^{\theta/2}p_n,e^{-\theta/2}p_m)\Br{n}[X^\dagger,\xi]\Kt{m}^*\Br{m}[\nu^\dagger,X]\Kt{n}^*\\
&=\sum_{n,m}\Phi(e^{\theta/2}p_n,e^{-\theta/2}p_m)\Br{m}[\xi^\dagger,X]\Kt{n}\Br{n}[X^\dagger,\nu]\Kt{m}\\
&=\avg{\xi,[X,[\rho]_{\theta}([X^\dagger,\nu])]}.\label{eq:prop2}
\end{align}
\end{subequations}
As $\mca{O}_\rho(\nu)=\sum_{\mu,\omega}e^{-\beta\omega/2}\alpha_\mu(\omega)[L_{\mu}(\omega),[\rho]_{\beta\omega}([L_{\mu}^\dagger(\omega),\nu])]$, Eq.~\eqref{eq:prop2} implies that
\begin{equation}
\avg{\nu,\mca{O}_\rho(\xi)}^*=\avg{\xi,\mca{O}_\rho(\nu)}.\label{eq:consym1}
\end{equation}
\end{proof}

From Eq.~\eqref{eq:prop1}, one observes that
\begin{equation}
\avg{\xi,[X,[\rho]_{\theta}([X^\dagger,\xi])]}=\sum_{n,m}\Phi(e^{\theta/2}p_n,e^{-\theta/2}p_m)|\Br{n}[X^\dagger,\xi]\Kt{m}|^2\ge 0.
\end{equation}
Therefore, $\avg{\xi,\mca{O}_\rho(\xi)}\ge 0$ for an arbitrary operator $\xi$.
Equality is attained only when $[L_\mu^\dagger(\omega),\xi]=0$ for all $\mu$ and $\omega$.
When $\xi$ is a self-adjoint operator, i.e., $\xi^\dagger=\xi$, we have $\avg{\xi,\mca{K}_\rho(\xi)}=\avg{\xi,\mca{O}_\rho(\xi)}\ge 0$.

\begin{proposition}\label{proposition:vanish}
A self-adjoint operator $\nu$ satisfies $\mca{K}_\rho(\nu)=0$ if and only if $\nu$ is spanned by $\mbb{I}_N$.
\end{proposition}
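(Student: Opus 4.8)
The statement is an equivalence, so the plan is to treat the two implications separately, the forward one being a one-line computation and the converse carrying the real content.

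For sufficiency I would simply observe that if $\nu=c\,\mbb{I}_N$ then $[\nu,\rho]=0$ and $[L_\mu^\dagger(\omega),\nu]=0$ for every $\mu,\omega$; each summand of $\mca{O}_\rho$ then vanishes and $\mca{K}_\rho(\nu)=i\beta^{-1}[\nu,\rho]+\mca{O}_\rho(\nu)=0$.

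The converse is where I would spend the effort, and the idea is to feed $\nu$ itself into the nonnegative form $\avg{\xi,\mca{O}_\rho(\xi)}\ge 0$ established just above, then use its equality condition. Assuming $\mca{K}_\rho(\nu)=0$ with $\nu^\dagger=\nu$, I would pair both sides with $\nu$ to get $\avg{\nu,\mca{K}_\rho(\nu)}=0$. Because $\nu$ is self-adjoint, the Hamiltonian part $i\beta^{-1}\Tr{\nu[\nu,\rho]}$ drops out by cyclicity of the trace, leaving $\avg{\nu,\mca{O}_\rho(\nu)}=0$. The stated equality case of the inequality then forces $[L_\mu^\dagger(\omega),\nu]=0$ for all $\mu$ and $\omega$. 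Two conclusions follow: every term of $\mca{O}_\rho(\nu)$ carries a factor $[L_\mu^\dagger(\omega),\nu]$, so $\mca{O}_\rho(\nu)=0$ and hence $[\nu,\rho]=0$ from $\mca{K}_\rho(\nu)=0$; and taking adjoints (using $\nu^\dagger=\nu$) shows $\nu$ commutes with every $L_\mu(\omega)$ as well, so that $\nu$ belongs to the commutant of the $\dagger$-closed family $\{L_\mu(\omega)\}_{\mu,\omega}$.

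The hard part will be the final reduction from ``$\nu$ commutes with all jump operators'' to ``$\nu\propto\mbb{I}_N$'', because this is the only step that cannot be settled by the algebraic identities already derived and instead needs a structural input about the model. I would invoke the ergodicity (irreducibility) assumption on the dynamics, namely that the commutant of the jump operators is trivial, $\{L_\mu(\omega)\}_{\mu,\omega}'=\mbb{C}\,\mbb{I}_N$ --- the standard condition guaranteeing a unique (faithful) steady state for a $\dagger$-closed set of Lindblad operators. Under this hypothesis the commutation relations obtained above place $\nu$ in $\mbb{C}\,\mbb{I}_N$, completing the proof; the relation $[\nu,\rho]=0$ is then automatic and merely serves as a consistency check.
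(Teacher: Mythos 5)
Your proposal is correct and follows essentially the same route as the paper: both handle the forward direction by direct substitution, and both prove the converse by pairing $\mca{K}_\rho(\nu)=0$ with $\nu$, using $\avg{\nu,\mca{K}_\rho(\nu)}=\avg{\nu,\mca{O}_\rho(\nu)}=0$ together with the equality condition of the positivity $\avg{\xi,\mca{O}_\rho(\xi)}\ge 0$ to force $[L_\mu^\dagger(\omega),\nu]=0$ for all $\mu,\omega$, and then invoking ergodicity to conclude $\nu\propto\mbb{I}_N$. Your additional remarks (that $\mca{O}_\rho(\nu)=0$, $[\nu,\rho]=0$, and that the family is $\dagger$-closed so $\nu$ lies in its commutant) merely make explicit what the paper leaves implicit in its appeal to ergodicity.
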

\begin{proof}
As $\mca{K}_\rho(\mbb{I}_N)=0$, we need only show that if $\mca{K}_\rho(\nu)=0$, then $\nu$ is spanned by $\mbb{I}_N$.
Noting that $0=\avg{\nu,\mca{K}_\rho(\nu)}=\avg{\nu,\mca{O}_\rho(\nu)}$, we find that $\avg{\nu,\mca{O}_\rho(\nu)}=0$ only when $[L_\mu^\dagger(\omega),\nu]=0$ for all $\mu$ and $\omega$.
As the dynamics of the quantum system are ergodic, this implies that $\nu$ is spanned by $\mbb{I}_N$.

\end{proof}

\begin{proposition}\label{proposition:traceless}
$\mca{K}_\rho(\nu)$ is a traceless self-adjoint operator for all $\nu\in\mfr{H}(\mca{H})$.
\end{proposition}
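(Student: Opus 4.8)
The plan is to treat the two asserted properties separately, since they draw on different structural features of $\mca{K}_\rho$.

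\emph{Tracelessness.} This follows immediately from the commutator structure and needs no self-adjointness of $\nu$. The Hamiltonian part $i\beta^{-1}[\nu,\rho]$ is a commutator, hence traceless. Every summand of $\mca{O}_\rho(\nu)$ has the outer form $[L_\mu(\omega),\,\cdot\,]$, so it is traceless as well. By linearity $\Tr{\mca{K}_\rho(\nu)}=0$.

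\emph{Self-adjointness.} I would split $\mca{K}_\rho(\nu)=i\beta^{-1}[\nu,\rho]+\mca{O}_\rho(\nu)$ and dispose of the first term first: when $\nu^\dagger=\nu$ and $\rho^\dagger=\rho$, the commutator $[\nu,\rho]$ is anti-Hermitian, so $i\beta^{-1}[\nu,\rho]$ is Hermitian. The real content is therefore to show $\mca{O}_\rho(\nu)^\dagger=\mca{O}_\rho(\nu)$. The key tool I would establish first is the adjoint identity for the tilted operator, $([\rho]_\theta(X))^\dagger=[\rho]_{-\theta}(X^\dagger)$, which drops out of the integral representation in Eq.~\eqref{eq:tilted.operator}: taking the adjoint (using that each $\rho^s$ is Hermitian and the scalar weight is real) and substituting $s\mapsto 1-s$ sends $e^{s\theta}\mapsto e^{(1-s)\theta}$ and flips the prefactor $e^{-\theta/2}\mapsto e^{\theta/2}$, which is exactly $[\rho]_{-\theta}(X^\dagger)$.

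With this identity in hand I would compute the adjoint of the generic $(\mu,\omega)$ summand of $\mca{O}_\rho(\nu)$. Using $[A,B]^\dagger=[B^\dagger,A^\dagger]$, the tilted adjoint identity above, and $([L_\mu^\dagger(\omega),\nu])^\dagger=[\nu,L_\mu(\omega)]$ (valid because $\nu^\dagger=\nu$), the adjoint of the $\omega$-term rearranges into a nested-commutator expression at tilt $-\beta\omega$. I would then invoke $L_\mu^\dagger(\omega)=L_\mu(-\omega)$, the reality of $\alpha_\mu(\omega)$, and the detailed-balance relation $\alpha_\mu(\omega)=e^{\beta\omega}\alpha_\mu(-\omega)$ to check that the reindexing $\omega\mapsto-\omega$ maps the adjoint of the $\omega$-term precisely onto the $(-\omega)$-term, including the collapse of the scalar prefactor $e^{-\beta\omega/2}\alpha_\mu(\omega)$ back to its original form. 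Since the set of Bohr frequencies is symmetric under $\omega\mapsto-\omega$, summing over $\omega$ restores $\mca{O}_\rho(\nu)$ and yields $\mca{O}_\rho(\nu)^\dagger=\mca{O}_\rho(\nu)$.

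The main obstacle is precisely the bookkeeping in this last step: correctly propagating the adjoints through the two nested commutators and the tilted operator, keeping track of the sign introduced by $[\nu,L_\mu(\omega)]=-[L_\mu(\omega),\nu]$, and verifying that the prefactors transform under $\omega\mapsto-\omega$ so as to reproduce the original summand. Everything else is routine once the tilted adjoint identity is available.
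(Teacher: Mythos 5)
Your proposal is correct and follows essentially the same route as the paper's proof: tracelessness from the commutator structure, Hermiticity of $i\beta^{-1}[\nu,\rho]$, and self-adjointness of $\mca{O}_\rho(\nu)$ via the identity $[\rho]_\theta(X)^\dagger=[\rho]_{-\theta}(X^\dagger)$ combined with $L_\mu^\dagger(\omega)=L_\mu(-\omega)$, the detailed-balance relation $e^{-\beta\omega/2}\alpha_\mu(\omega)=e^{\beta\omega/2}\alpha_\mu(-\omega)$, and the $\omega\mapsto-\omega$ reindexing. The only difference is cosmetic: you derive the tilted-operator adjoint identity from the integral representation, whereas the paper invokes it as a known relation.
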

\begin{proof}
The expression
\begin{equation}
\mca{K}_\rho(\nu)=i\beta^{-1}[\nu,\rho]+\mca{O}_\rho(\nu)=i\beta^{-1}[\nu,\rho]+\sum_{\mu,\omega}e^{-\beta\omega/2}\alpha_\mu(\omega)[L_{\mu}(\omega),[\rho]_{\beta\omega}([L_{\mu}^\dagger(\omega),\nu])]
\end{equation}
is a linear combination of commutators.
Therefore, $\Tr{\mca{K}_\rho(\nu)}=0$ is immediately derived.
Note that $(i\beta^{-1}[\nu,\rho])^\dagger=i\beta^{-1}[\nu,\rho]$, so we need only show that $\mca{O}_\rho(\nu)$ is self-adjoint.
Using the relations $[\rho]_\theta(X)^\dagger=[\rho]_{-\theta}(X^\dagger)$, $[X,Y]^\dagger=[Y^\dagger,X^\dagger]$, $e^{-\beta\omega/2}\alpha_\mu(\omega)=e^{\beta\omega/2}\alpha_\mu(-\omega)$, and $L_{\mu}^\dagger(\omega)=L_{\mu}(-\omega)$, we can prove that $\mca{O}_\rho(\nu)$ is self-adjoint as follows:
\begin{align}
\mca{O}_\rho(\nu)^\dagger&=\sum_{\mu,\omega}e^{-\beta\omega/2}\alpha_\mu(\omega)[L_{\mu}(\omega),[\rho]_{\beta\omega}([L_{\mu}^\dagger(\omega),\nu])]^\dagger\\
&=\sum_{\mu,\omega}e^{-\beta\omega/2}\alpha_\mu(\omega)[[\rho]_{\beta\omega}([L_{\mu}^\dagger(\omega),\nu])^\dagger,L_{\mu}(\omega)^\dagger]\\
&=\sum_{\mu,\omega}e^{\beta\omega/2}\alpha_\mu(-\omega)[[\rho]_{-\beta\omega}([\nu,L_{\mu}(\omega)]),L_{\mu}(-\omega)]\\
&=\sum_{\mu,\omega}e^{\beta\omega/2}\alpha_\mu(-\omega)[L_{\mu}(-\omega),[\rho]_{-\beta\omega}([L_{\mu}^\dagger(-\omega),\nu])]\\
&=\mca{O}_\rho(\nu).
\end{align}
\end{proof}

\begin{lemma}
For an arbitrary density operator $\rho$ and traceless self-adjoint operator $\vartheta$, there exists a unique traceless self-adjoint operator $\nu$ such that $\vartheta=\mca{K}_\rho(\nu)$.
\end{lemma}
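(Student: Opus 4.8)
The plan is to recast the statement as a bijectivity claim for a real-linear map and to settle it by dimension counting. First I would note that $\mca{K}_\rho$ acts real-linearly on the real vector space $\mfr{H}(\mca{H})$ of self-adjoint operators: both defining pieces, $i\beta^{-1}[\nu,\rho]$ and $\mca{O}_\rho(\nu)$, are linear in $\nu$ and preserve self-adjointness under real scaling and addition (Hermiticity is broken only by multiplication by $i$, which does not occur when one scales by reals). The traceless self-adjoint operators form the codimension-one real subspace $\mfr{H}_0(\mca{H})\subset\mfr{H}(\mca{H})$, of dimension $N^2-1$.

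By Proposition~\ref{proposition:traceless}, $\mca{K}_\rho$ sends every self-adjoint operator---and in particular every traceless one---to a traceless self-adjoint operator. Hence the restriction $\mca{K}_\rho|_{\mfr{H}_0(\mca{H})}$ is a real-linear endomorphism of the finite-dimensional space $\mfr{H}_0(\mca{H})$. The next step is to compute its kernel. By Proposition~\ref{proposition:vanish}, the only self-adjoint solutions of $\mca{K}_\rho(\nu)=0$ are the scalar multiples of $\mbb{I}_N$; since $\Tr{\mbb{I}_N}=N\neq 0$, none of these (other than $0$) lie in $\mfr{H}_0(\mca{H})$. Therefore $\mca{K}_\rho|_{\mfr{H}_0(\mca{H})}$ has trivial kernel and is injective.

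Finally, an injective real-linear map between finite-dimensional real vector spaces of equal dimension is automatically surjective, by the rank--nullity theorem. This yields at once the existence of a traceless self-adjoint $\nu$ with $\vartheta=\mca{K}_\rho(\nu)$ (surjectivity) and its uniqueness (injectivity). I do not anticipate a genuine analytic obstacle: the substantive work is already carried by Propositions~\ref{proposition:traceless} and~\ref{proposition:vanish}, and the only point deserving a moment's care is the verification that $\mca{K}_\rho$ is truly real-linear on $\mfr{H}(\mca{H})$, so that the rank--nullity argument applies over $\mbb{R}$ rather than $\mbb{C}$. As a cross-check one may observe that injectivity also follows directly from the positivity $\avg{\nu,\mca{K}_\rho(\nu)}=\avg{\nu,\mca{O}_\rho(\nu)}\ge 0$ established above, which vanishes for a traceless self-adjoint $\nu$ only if $\nu=0$.
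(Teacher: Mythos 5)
Your proof is correct, and it reaches the result by a genuinely cleaner route than the paper, even though both arguments rest on the same two pillars (Propositions \ref{proposition:vanish} and \ref{proposition:traceless}) and both are, at bottom, dimension-counting. The paper works inside the full complex operator space: it expands in the generalized Gell-Mann basis $\{\chi_{j,k}\}$, proves that the $N^2-1$ images $\mca{K}_\rho(\chi_{j,k})$, $(j,k)\neq(N,N)$, are linearly independent, adjoins an auxiliary matrix $\phi$ to complete a basis, uses a trace argument to show $\phi$ can be traded for $\mbb{I}_N$, and only then extracts the statement about traceless self-adjoint operators with real coefficients. You instead observe at the outset that the traceless self-adjoint operators form a real vector space $\mfr{H}_0(\mca{H})$ of dimension $N^2-1$, that Proposition \ref{proposition:traceless} makes $\mca{K}_\rho$ a real-linear endomorphism of this space, and that Proposition \ref{proposition:vanish} plus $\Tr{\mbb{I}_N}=N\neq 0$ kills the kernel; rank--nullity then gives bijectivity in one stroke. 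What your packaging buys is brevity and the avoidance of two fussy points the paper must handle implicitly (why the expansion coefficients of a self-adjoint operator in the self-adjoint images are real, and the role of the auxiliary matrix $\phi$); what the paper's version buys is an explicit constructive recipe for $\nu$ as a Gell-Mann expansion. One small caveat on your final ``cross-check'': the vanishing of $\avg{\nu,\mca{O}_\rho(\nu)}$ forces $\nu\propto\mbb{I}_N$ only via the equality condition $[L_\mu^\dagger(\omega),\nu]=0$ together with ergodicity, which is precisely the content and proof of Proposition \ref{proposition:vanish} --- so it is a restatement of that proposition rather than an independent verification, and it should not be presented as adding separate evidence.
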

\begin{proof}
Let $\mca{B}=\{\chi_{j,k}\}_{1\le j,k\le N}$ denote the set of generalized Gell-Mann matrices, which span the space of operators in the complex Hilbert space $\mca{H}$.
Specifically, $\chi_{j,k}$ can be expressed as follows:
\begin{equation}
\chi_{j,k}=\begin{cases}
E_{k,j}+E_{j,k}, & {\text{if}}~ j<k, \\
i(E_{k,j}-E_{j,k}), & {\text{if}}~ j>k, \\
\sqrt{\frac{2}{j(j+1)}}\bra{\sum_{l=1}^{j}E_{l,l}-jE_{j+1,j+1}}, & {\text{if}}~ j=k<N, \\
N^{-1}\mbb{I}_N, & {\text{if}}~ j=k=N, \\
\end{cases}
\end{equation}
where $E_{j,k}$ denotes a matrix with $1$ in the $jk$-th entry and $0$ elsewhere.
In this construction, each $\chi_{j,k}$ is a Hermitian matrix and $\Tr{\chi_{j,k}}=\delta_{jN}\delta_{kN}$ for all $(j,k)$.
For convenience, we define a set $\overline{\mca{B}}:=\mca{B}\setminus\{\chi_{N,N}\}$.
For arbitrary traceless self-adjoint operator $X$, there exists real coefficients $c_{j,k}\in\mbb{R}$ such that $X=\sum_{j,k}c_{j,k}\chi_{j,k}$.
Taking the trace of both sides of the equation, we obtain $0=\Tr{X}=\sum_{j,k}c_{j,k}\Tr{\chi_{j,k}}=c_{N,N}$.
This implies that $X$ can be expressed as a linear combination of matrices in $\overline{\mca{B}}$ with all real coefficients.

By propositions \ref{proposition:vanish} and \ref{proposition:traceless}, $\mca{K}_\rho(\chi_{j,k})$ is obviously a nonzero traceless self-adjoint operator for all $(j,k)\neq(N,N)$.
We now show that $\{\mca{K}_\rho(\chi_{j,k})\}_{(j,k)\neq(N,N)}$ is an independent set, i.e., $\sum_{(j,k)\neq(N,N)}c_{j,k}\mca{K}_\rho(\chi_{j,k})=0$ only when $c_{j,k}=0$ for all $j,k$.
Indeed, by the linearity of $\mca{K}_\rho$, we have
\begin{equation}
\sum_{(j,k)\neq(N,N)}c_{j,k}\mca{K}_\rho(\chi_{j,k})=\mca{K}_\rho\Big(\sum_{(j,k)\neq(N,N)}c_{j,k}\chi_{j,k}\Big)=0.
\end{equation}
Under proposition \ref{proposition:vanish}, $\sum_{(j,k)\neq(N,N)}c_{j,k}\chi_{j,k}$ must be spanned by $\mbb{I}_N$ ($=N\chi_{N,N}$), i.e., $\sum_{(j,k)\neq(N,N)}c_{j,k}\chi_{j,k}=-c_{N,N}\chi_{N,N}$ for some $c_{N,N}$.
This is equivalent to $\sum_{1\le j,k\le N}c_{j,k}\chi_{j,k}=0$.
As $\mca{B}$ is a basis of $\mca{H}$, this equivalence requires that $c_{j,k}=0$ for all $j,k$.

Because $\{\mca{K}_\rho(\chi_{j,k})\}_{(j,k)\neq(N,N)}$ has $N^2-1$ elements, we can add another matrix $\phi$ to form a new basis of $\mca{H}$.
In terms of the elements of the new basis, $\mbb{I}_N$ can then be expressed as
\begin{equation}
\mbb{I}_N=z\phi+\sum_{(j,k)\neq(N,N)}c_{j,k}\mca{K}_\rho(\chi_{j,k}),\label{eq:Id.new.basis}
\end{equation}
where $z$ is some complex number.
Taking the trace of both sides of Eq.~\eqref{eq:Id.new.basis}, we obtain $N=z\,\Tr{\phi}$, which indicates that $z\neq 0$.
Therefore, $\phi$ can be expressed in terms of $\mbb{I}_N$ and $\{\mca{K}_\rho(\chi_{j,k})\}_{(j,k)\neq(N,N)}$ as
\begin{equation}
\phi=z^{-1}\Big[\mbb{I}_N-\sum_{(j,k)\neq(N,N)}c_{j,k}\mca{K}_\rho(\chi_{j,k})\Big].\label{eq:phi.new.basis}
\end{equation}
Equation \eqref{eq:phi.new.basis} implies that an arbitrary matrix can be expressed as a linear combination of elements in the following set:
\begin{equation}
\mca{S}:=\{\mbb{I}_N\}\cup\{\mca{K}_\rho(\chi_{j,k})\}_{(j,k)\neq(N,N)}.
\end{equation}
Equivalently, $\mca{S}$ is a basis of $\mca{H}$.
Consequently, because $\mca{K}_\rho(\chi_{j,k})$ is traceless and self-adjoint, an arbitrary traceless self-adjoint operator $\vartheta$ can be expressed in terms of $\{\mca{K}_\rho(\chi_{j,k})\}_{(j,k)\neq(N,N)}$ with real coefficients $\{c_{j,k}\}$ as
\begin{equation}
\vartheta=\sum_{(j,k)\neq(N,N)}c_{j,k}\mca{K}_\rho(\chi_{j,k})=\mca{K}_\rho\Big(\sum_{(j,k)\neq(N,N)}c_{j,k}\chi_{j,k}\Big).
\end{equation}
Defining the traceless self-adjoint operator $\nu:=\sum_{(j,k)\neq(N,N)}c_{j,k}\chi_{j,k}$, one readily obtains $\vartheta=\mca{K}_\rho(\nu)$.
Finally, to prove the uniqueness of $\nu$, we assume two traceless self-adjoint operators $\nu_1$ and $\nu_2$ such that $\vartheta=\mca{K}_\rho(\nu_1)=\mca{K}_\rho(\nu_2)$, then $\mca{K}_\rho(\nu_1-\nu_2)=0$.
Applying the result in proposition \ref{proposition:vanish}, we have $\nu_1-\nu_2=z\mbb{I}_N$ for some $z\in\mbb{C}$.
Thus, $zN=\Tr{z\mbb{I}_N}=\Tr{\nu_1-\nu_2}=0\Rightarrow z=0$, which implies the uniqueness of $\nu$.

\end{proof}

\begin{lemma}
Given an arbitrary traceless self-adjoint operator $\nu$, the equality $\avg{\nu+\lambda\mbb{I}_N,\mca{K}_\rho(\nu+\lambda\mbb{I}_N)}=\avg{\nu,\mca{K}_\rho(\nu)}$ holds for an arbitrary number $\lambda\in\mbb{C}$.
\end{lemma}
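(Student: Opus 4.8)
The plan is to reduce the claim to two structural facts already in hand: the linearity of $\mca{K}_\rho$ together with $\mca{K}_\rho(\mbb{I}_N)=0$ (noted at the outset of the proof of Proposition~\ref{proposition:vanish}), and the tracelessness of $\mca{K}_\rho(\nu)$ for self-adjoint $\nu$ (Proposition~\ref{proposition:traceless}). Together these collapse the identity down to a single cross term that then vanishes, so no genuine computation is needed.

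First I would act on the second slot of the inner product. By linearity of $\mca{K}_\rho$,
\begin{equation}
\mca{K}_\rho(\nu+\lambda\mbb{I}_N)=\mca{K}_\rho(\nu)+\lambda\,\mca{K}_\rho(\mbb{I}_N)=\mca{K}_\rho(\nu),
\end{equation}
so the argument of $\mca{K}_\rho$ is left unchanged irrespective of $\lambda$.

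Next I would expand the first slot, keeping in mind that $\avg{X,Y}=\Tr{X^\dagger Y}$ is conjugate-linear in its first entry and that $\nu^\dagger=\nu$, which yields
\begin{equation}
\avg{\nu+\lambda\mbb{I}_N,\mca{K}_\rho(\nu)}=\Tr{(\nu+\lambda^*\mbb{I}_N)\mca{K}_\rho(\nu)}=\avg{\nu,\mca{K}_\rho(\nu)}+\lambda^*\,\Tr{\mca{K}_\rho(\nu)}.
\end{equation}

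Finally I would invoke Proposition~\ref{proposition:traceless}, which guarantees $\Tr{\mca{K}_\rho(\nu)}=0$, so the residual $\lambda^*$ term drops out and the asserted equality follows. There is essentially no obstacle here; the only point deserving care is that the complex conjugate generated by the first slot multiplies a traceless operator, so the cross term vanishes for \emph{every} $\lambda\in\mbb{C}$, and in particular the result does not require $\lambda$ to be real.
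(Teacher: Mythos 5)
Your proof is correct and follows exactly the paper's own argument: linearity of $\mca{K}_\rho$ together with $\mca{K}_\rho(\mbb{I}_N)=0$ collapses the second slot, and the conjugate-linear cross term $\lambda^*\Tr{\mca{K}_\rho(\nu)}$ vanishes by Proposition~\ref{proposition:traceless}. No differences worth noting.
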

\begin{proof}
Since $\mca{K}_\rho(\nu+\lambda\mbb{I}_N)=\mca{K}_\rho(\nu)+\mca{K}_\rho(\lambda\mbb{I}_N)=\mca{K}_\rho(\nu)$, we have
\begin{subequations}
\begin{align}
\avg{\nu+\lambda\mbb{I}_N,\mca{K}_\rho(\nu+\lambda\mbb{I}_N)}&=\avg{\nu+\lambda\mbb{I}_N,\mca{K}_\rho(\nu)}\\
&=\avg{\nu,\mca{K}_\rho(\nu)}+\avg{\lambda\mbb{I}_N,\mca{K}_\rho(\nu)}\\
&=\avg{\nu,\mca{K}_\rho(\nu)}+\lambda^*\Tr{\mca{K}_\rho(\nu)}\\
&=\avg{\nu,\mca{K}_\rho(\nu)},
\end{align}
\end{subequations}
where we have used the traceless property of $\mca{K}_\rho$ obtained in proposition \ref{proposition:traceless}.
\end{proof}

\subsection{Lower bound of the quantum Wasserstein distance in terms of the trace-like distance}
Here we derive the lower bound of the quantum Wasserstein distance $\mca{W}_{\rm q}(\rho_0,\rho_\tau)$ in terms of the trace-like distance.
From the definition of the quantum Wasserstein distance, given a fixed positive number $\delta>0$, there exists a smooth curve $\rho(t)$ with end points $\rho_0$ and $\rho_\tau$ such that
\begin{equation}
\tau\int_0^\tau\avg{\nu,\mca{K}_\rho(\nu)}dt\le\mca{W}_{\rm q}(\rho_0,\rho_\tau)^2+\delta.
\end{equation}
Here, $\nu(t)\in\mfr{H}(\mca{H})$ is a traceless self-adjoint operator that satisfies $\Dr{\rho}(t)=\mca{K}_\rho[\nu(t)]$.
Let $\rho(t)=\sum_np_n(t)\Mat{n(t)}$ be a spectral decomposition with an orthogonal basis $\langle n(t)|m(t)\rangle=\delta_{nm}$, and define the self-adjoint operator $\phi(t):=\sum_nc_n\Mat{n(t)}$, where $|c_n|\le 1$ are real constants to be determined later.
Evidently, $\phi(t)$ commutes with $\rho(t)$, i.e., $[\phi,\rho]=0$.
Now, using the relations $\Dr{\rho}=i\beta^{-1}[\nu,\rho]+\mca{O}_\rho(\nu)$ and $\avg{\phi,[\nu,\rho]}=0$, we have 
\begin{subequations}
\begin{align}
\sum_nc_n[p_n(\tau)-p_n(0)]&=\Tr{\int_0^\tau\phi(t)\dot{\rho}(t)dt}\\
&=\int_0^\tau\avg{\phi,i\beta^{-1}[\nu,\rho]+\mca{O}_\rho(\nu)}dt\\
&=\int_0^\tau\avg{\phi,\mca{O}_\rho(\nu)}dt\\
&\le\bra{\int_0^\tau\avg{\phi,\mca{O}_\rho(\phi)}dt}^{1/2}\bra{\int_0^\tau\avg{\nu,\mca{O}_\rho(\nu)}dt}^{1/2}\\
&\le\bra{\tau^{-1}\int_0^\tau\avg{\phi,\mca{O}_\rho(\phi)}dt}^{1/2}\bra{\mca{W}_{\rm q}(\rho_0,\rho_\tau)^2+\delta}^{1/2}.\label{eq:trace.bound.1}
\end{align}
\end{subequations}
The first term in the last inequality \eqref{eq:trace.bound.1} can be rewritten as
\begin{subequations}
\begin{align}
\avg{\phi,\mca{O}_\rho(\phi)}&=\sum_{\mu,\omega}e^{-\beta\omega/2}\alpha_\mu(\omega)\avg{\phi,[L_{\mu}(\omega),[\rho]_{\beta\omega}([L_{\mu}^\dagger(\omega),\phi])]}\\
&=\sum_{\mu,\omega}e^{-\beta\omega/2}\alpha_\mu(\omega)\Tr{[\phi,L_{\mu}(\omega)][\rho]_{\beta\omega}([L_{\mu}^\dagger(\omega),\phi])}\\
&=\sum_{\mu,\omega}e^{-\beta\omega/2}\alpha_\mu(\omega)\avg{[L_{\mu}^\dagger(\omega),\phi],[\rho]_{\beta\omega}([L_{\mu}^\dagger(\omega),\phi])}.\label{eq:puri.term.1}
\end{align}
\end{subequations}
Before proceeding, we prove the following result.
\begin{proposition}\label{prop1}
For an arbitrary operator $X$, a real number $\theta$, and density operator $\rho$, the inequality
\begin{equation}
\avg{X,[\rho]_\theta(X)}\le\frac{1}{2}(e^{\theta/2}+e^{-\theta/2})\|X\|_\infty^2
\end{equation}
holds, where $\|X\|_\infty$ denotes the spectral norm of the operator $X$.
\end{proposition}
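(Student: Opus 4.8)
The plan is to reduce this operator inequality to a scalar one by diagonalizing $\rho$ and then invoking a classical mean inequality. First I would write $\rho=\sum_np_n\Mat{n}$ in its spectral decomposition and substitute the explicit form of the tilted operator from Eq.~\eqref{eq:tilted.operator} into $\avg{X,[\rho]_\theta(X)}=\Tr{X^\dagger[\rho]_\theta(X)}$. Performing the trace against $X^\dagger$ collapses the double sum into the diagonal-weighted form
\begin{equation}
\avg{X,[\rho]_\theta(X)}=\sum_{n,m}\Phi(e^{\theta/2}p_n,e^{-\theta/2}p_m)|\Br{n}X\Kt{m}|^2,
\end{equation}
which is a nonnegative combination of the matrix elements $|\Br{n}X\Kt{m}|^2$ weighted by logarithmic means.

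The decisive step is to control each logarithmic-mean weight by the arithmetic mean through the standard inequality $\Phi(x,y)\le(x+y)/2$ for positive $x,y$. Applying this with $x=e^{\theta/2}p_n$ and $y=e^{-\theta/2}p_m$ and then splitting the resulting sum gives
\begin{equation}
\avg{X,[\rho]_\theta(X)}\le\frac{e^{\theta/2}}{2}\sum_{n,m}p_n|\Br{n}X\Kt{m}|^2+\frac{e^{-\theta/2}}{2}\sum_{n,m}p_m|\Br{n}X\Kt{m}|^2.
\end{equation}
Using the completeness relation $\sum_m\Mat{m}=\mbb{I}_N$ to carry out the inner sums, I would recognize the two pieces as $\Tr{\rho XX^\dagger}$ and $\Tr{\rho X^\dagger X}$, respectively.

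Finally, because $\rho$ is a density operator and both $XX^\dagger$ and $X^\dagger X$ are positive semidefinite with spectral norm equal to $\|X\|_\infty^2$, the elementary estimate $\Tr{\rho A}\le\|A\|_\infty\Tr{\rho}=\|A\|_\infty$ for positive semidefinite $A$ gives $\Tr{\rho XX^\dagger}\le\|X\|_\infty^2$ and $\Tr{\rho X^\dagger X}\le\|X\|_\infty^2$. Substituting these back reproduces the claimed bound $\avg{X,[\rho]_\theta(X)}\le\frac{1}{2}(e^{\theta/2}+e^{-\theta/2})\|X\|_\infty^2$. I expect the only genuinely nontrivial ingredient to be the logarithmic-mean inequality $\Phi(x,y)\le(x+y)/2$; once it is in hand, the remaining manipulations are routine spectral algebra.
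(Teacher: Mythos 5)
Your proof is correct and follows essentially the same route as the paper: the same expansion of $\avg{X,[\rho]_\theta(X)}$ via the spectral form of the tilted operator, the same logarithmic-to-arithmetic mean bound $\Phi(x,y)\le(x+y)/2$, and the same use of completeness. The only cosmetic difference is that you package the final step as the trace inequality $\Tr{\rho A}\le\|A\|_\infty$ for positive semidefinite $A$, whereas the paper bounds the diagonal elements $\Br{n}XX^\dagger\Kt{n}\le\|X\|_\infty^2$ directly and sums against $p_n$.
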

\begin{proof}
Using Eq.~\eqref{eq:tilted.operator}, we have
\begin{equation}
\avg{X,[\rho]_\theta(X)}=\sum_{n,m}\Phi(e^{\theta/2}p_n,e^{-\theta/2}p_m)\Br{n}X\Kt{m}\Br{m}X^\dagger\Kt{n}.
\end{equation}
Applying the inequality $\Phi(x,y)\le(x+y)/2$ and the relation $\sum_{n}\Mat{n}=\mbb{I}_N$, we obtain
\begin{subequations}
\begin{align}
\avg{X,[\rho]_\theta(X)}&\le\frac{1}{2}\sum_{n,m}\bra{e^{\theta/2}p_n+e^{-\theta/2}p_m}\Br{n}X\Kt{m}\Br{m}X^\dagger\Kt{n}\\
&=\frac{1}{2}\sum_{n,m}e^{\theta/2}p_n\Br{n}X\Kt{m}\Br{m}X^\dagger\Kt{n}+\frac{1}{2}\sum_{m,n}e^{-\theta/2}p_m\Br{m}X^\dagger\Kt{n}\Br{n}X\Kt{m}\\
&=\frac{1}{2}\sum_{n}e^{\theta/2}p_n\Br{n}XX^\dagger\Kt{n}+\frac{1}{2}\sum_{m}e^{-\theta/2}p_m\Br{m}X^\dagger X\Kt{m}\label{eq:bound.tilted.op.c}\\
&\le\frac{1}{2}\sum_{n}e^{\theta/2}p_n\|X\|_\infty^2+\frac{1}{2}\sum_{m}e^{-\theta/2}p_m\|X\|_\infty^2\label{eq:bound.tilted.op.d}\\
&=\frac{1}{2}(e^{\theta/2}+e^{-\theta/2})\|X\|_\infty^2.\label{eq:bound.tilted.op.e}
\end{align}
\end{subequations}
Here we applied two facts: $\Br{n}XX^\dagger\Kt{n}\le\|X\|_\infty^2$ in Eq.~\eqref{eq:bound.tilted.op.d} and $\sum_np_n=1$ in Eq.~\eqref{eq:bound.tilted.op.e}.
\end{proof}

Returning to our problem, we apply proposition \ref{prop1} with $X=[L_{\mu}^\dagger(\omega),\phi]$ and $\theta=\beta\omega$, and hence obtain
\begin{equation}
\avg{[L_{\mu}^\dagger(\omega),\phi],[\rho]_{\beta\omega}([L_{\mu}^\dagger(\omega),\phi])}\le\frac{1}{2}(e^{-\beta\omega/2}+e^{\beta\omega/2})\|[L_{\mu}^\dagger(\omega),\phi]\|_\infty^2\le 2(e^{-\beta\omega/2}+e^{\beta\omega/2})\|L_{\mu}(\omega)\|_\infty^2.
\end{equation}
Here, we used the inequalities $\|[X,Y]\|_\infty\le\|XY\|_\infty+\|YX\|_\infty$ and $\|XY\|_\infty\le\|X\|_\infty\|Y\|_\infty$ for all $X,Y\in\mfr{L}(\mca{H})$.
Consequently, we have
\begin{equation}
\avg{\phi,\mca{O}_\rho(\phi)}\le 2\sum_{\mu,\omega}e^{-\beta\omega/2}\alpha_\mu(\omega)(e^{-\beta\omega/2}+e^{\beta\omega/2})\|L_{\mu}(\omega)\|_\infty^2=4\sum_{\mu,\omega}\alpha_\mu(\omega)\|L_{\mu}(\omega)\|_\infty^2.\label{eq:trace.bound.2}
\end{equation}
From Eqs.~\eqref{eq:trace.bound.1} and \eqref{eq:trace.bound.2}, we easily obtain the following inequality:
\begin{equation}
\mca{W}_{\rm q}(\rho_0,\rho_\tau)^2+\delta\ge\frac{(\sum_nc_n[p_n(\tau)-p_n(0)])^2}{4\tau^{-1}\int_0^\tau \sum_{\mu,\omega}\alpha_\mu(\omega)\|L_{\mu}(\omega)\|_\infty^2dt}.\label{eq:trace.bound.3}
\end{equation}
Setting $c_n={\rm sign}[p_n(\tau)-p_n(0)]$ and taking the limit $\delta\to 0$ in Eq.~\eqref{eq:trace.bound.3}, a lower bound of the quantum Wasserstein distance is obtained as
\begin{equation}
\mca{W}_{\rm q}(\rho_0,\rho_\tau)\ge\frac{\sum_n|p_n(\tau)-p_n(0)|}{2\sqrt{\tau^{-1}\int_0^\tau \sum_{\mu,\omega}\alpha_\mu(\omega)\|L_{\mu}(\omega)\|_\infty^2dt}}.\label{eq:trace.bound.4}
\end{equation}
From Eq.~\eqref{eq:trace.bound.4}, we wish to bound the Wasserstein distance by the trace-like distance $\msf{d}_{\rm T}(\rho_0,\rho_\tau)=\sum_{n=1}^{N}|a_n-b_n|$, where $a_1\le a_2\le\dots\le a_N$ and $b_1\le b_2\le\dots\le b_N$ are increasing eigenvalues of $\rho_0$ and $\rho_\tau$.
Given two arrays of real numbers, $\{x_n\}$ and $\{y_n\}$, one can prove that
\begin{equation}
\sum_n|x_n-y_n|\ge\sum_n|x_n-y_{\chi(n)}|,
\end{equation}
where $\{\chi(n)\}$ is a permutation of $\{n\}$ such that $y_{\chi(n)}\ge y_{\chi(m)}$ if $x_n\ge x_m$.
Therefore, $\sum_n|p_n(\tau)-p_n(0)|\ge \msf{d}_{\rm T}(\rho_0,\rho_\tau)$, so the bound in terms of the trace-like distance is written as
\begin{equation}
\mca{W}_{\rm q}(\rho_0,\rho_\tau)\ge\frac{\msf{d}_{\rm T}(\rho_0,\rho_\tau)}{2\sqrt{\tau^{-1}\int_0^\tau \sum_{\mu,\omega}\alpha_\mu(\omega)\|L_{\mu}(\omega)\|_\infty^2dt}}.
\end{equation}

\subsection{Lower bound of the entropy production in terms of the average energy-change distance}
Here we derive the lower bound of the entropy production $\Delta S_{\rm tot}$ in terms of the distance $\msf{d}_{\rm E}(\rho_0,\rho_\tau)=|\Tr{H(\rho_0-\rho_\tau)}|$.
The Lindblad master equation can be expressed as $\Dr{\rho}(t)=-i[H,\rho(t)]+\mca{O}_\rho[\phi(t)]$, where $\phi(t):=-\ln\rho(t)+\ln\rho^{\rm eq}$.
Using the relations $\Tr{H[H,\rho]}=0$ and $\Delta S_{\rm tot}=\int_0^\tau \avg{\phi,\mca{O}_\rho(\phi)}dt$, we obtain 
\begin{subequations}
\begin{align}
|\Tr{H(\rho_0-\rho_\tau)}|&=\left|\Tr{H\int_0^\tau\dot{\rho}(t)dt}\right|\\
&=\left|\int_0^\tau\avg{H,\mca{O}_\rho(\phi)}dt\right|\\
&\le\bra{\int_0^\tau\avg{H,\mca{O}_\rho(H)}dt}^{1/2}\bra{\int_0^\tau\avg{\phi,\mca{O}_\rho(\phi)}dt}^{1/2}\\
&=\bra{\int_0^\tau\avg{H,\mca{O}_\rho(H)}dt}^{1/2}\sqrt{\Delta S_{\rm tot}}.\label{eq:energy.bound.1}
\end{align}
\end{subequations}
The first term in Eq.~\eqref{eq:energy.bound.1} can be rewritten as
\begin{subequations}
\begin{align}
\avg{H,\mca{O}_\rho(H)}&=\sum_{\mu,\omega}e^{-\beta\omega/2}\alpha_\mu(\omega)\avg{H,[L_{\mu}(\omega),[\rho]_{\beta\omega}([L_{\mu}^\dagger(\omega),H])]}\\
&=\sum_{\mu,\omega}e^{-\beta\omega/2}\alpha_\mu(\omega)\Tr{[H,L_{\mu}(\omega)][\rho]_{\beta\omega}([L_{\mu}^\dagger(\omega),H])}\\
&=\sum_{\mu,\omega}e^{-\beta\omega/2}\alpha_\mu(\omega)\avg{[L_{\mu}^\dagger(\omega),H],[\rho]_{\beta\omega}([L_{\mu}^\dagger(\omega),H])}.\label{eq:term.1}
\end{align}
\end{subequations}
Applying proposition \ref{prop1} with $X=[L_\mu^\dagger(\omega),H]$ and $\theta=\beta\omega$, one obtains
\begin{equation}
\avg{[L_\mu^\dagger(\omega),H],[\rho]_{\beta\omega}([L_\mu^\dagger(\omega),H])}\le\frac{1}{2}(e^{-\beta\omega/2}+e^{\beta\omega/2})\|[L_\mu^\dagger(\omega),H]\|_\infty^2=\frac{1}{2}(e^{-\beta\omega/2}+e^{\beta\omega/2})\omega^2\|L_\mu(\omega)\|_\infty^2.
\end{equation}
Consequently, we have
\begin{equation}
\avg{H,\mca{O}_\rho(H)}\le \frac{1}{2}\sum_{\mu,\omega}e^{-\beta\omega/2}\alpha_\mu(\omega)(e^{-\beta\omega/2}+e^{\beta\omega/2})\omega^2\|L_{\mu}(\omega)\|_\infty^2=\sum_{\mu,\omega}\alpha_\mu(\omega)\omega^2\|L_{\mu}(\omega)\|_\infty^2.\label{eq:energy.bound.2}
\end{equation}
From Eqs.~\eqref{eq:energy.bound.1} and \eqref{eq:energy.bound.2}, we readily obtain the following inequality:
\begin{equation}
\Delta S_{\rm tot}\ge\frac{\msf{d}_{\rm E}(\rho_0,\rho_\tau)^2}{\tau\sum_{\mu,\omega}\alpha_\mu(\omega)\omega^2\|L_{\mu}(\omega)\|_\infty^2}.\label{eq:energy.bound.3}
\end{equation}

A tighter bound in terms of the square of the heat current can be analogously obtained.
Applying the Cauchy--Schwarz inequality, we have
\begin{subequations}
\begin{align}
\int_0^\tau |\dot{Q}|dt &=\int_0^\tau\left|\Tr{H\dot{\rho}(t)}\right|dt\\
&=\int_0^\tau|\avg{H,\mca{O}_\rho(\phi)}|dt\\
&\le\bra{\int_0^\tau\avg{H,\mca{O}_\rho(H)}dt}^{1/2}\bra{\int_0^\tau\avg{\phi,\mca{O}_\rho(\phi)}dt}^{1/2}\\
&=\bra{\int_0^\tau\avg{H,\mca{O}_\rho(H)}dt}^{1/2}\sqrt{\Delta S_{\rm tot}}.
\end{align}
\end{subequations}
Subsequently, applying Eq.~\eqref{eq:energy.bound.2}, we obtain
\begin{equation}
\Delta S_{\rm tot}\ge\frac{\bra{\int_0^\tau |\dot{Q}|dt}^2}{\tau\sum_{\mu,\omega}\alpha_\mu(\omega)\omega^2\|L_{\mu}(\omega)\|_\infty^2}.
\end{equation}
This bound is a quantum version of the classical bound reported in Ref.~\cite{Shiraishi.2016.PRL}, in which the total entropy production is bounded from below by the square of the heat current.

Analogously, we can also derive a lower bound on $\Delta S_{\rm tot}$ in terms of the change of entropy $\Delta S$ of the system.
First, we can bound $\Delta S$ from above as
\begin{subequations}
\begin{align}
|\Delta S|&\le\int_0^\tau |\Tr{\Dr{\rho}(t)\ln\rho(t)}|dt\\
&=\int_0^\tau|\avg{\ln\rho,\mca{O}_\rho(\phi)}|dt\\
&\le\bra{\int_0^\tau\avg{\ln\rho,\mca{O}_\rho(\ln\rho)}dt}^{1/2}\bra{\int_0^\tau\avg{\phi,\mca{O}_\rho(\phi)}dt}^{1/2}\\
&=\sqrt{\Upsilon}\sqrt{\Delta S_{\rm tot}},
\end{align}
\end{subequations}
where $\Upsilon :=\int_0^\tau\avg{\ln\rho,\mca{O}_\rho(\ln\rho)}dt$.
Then, the total entropy production can be bounded from below as
\begin{equation}
\Delta S_{\rm tot}\ge \frac{|\Delta S|^2}{\Upsilon}.
\end{equation}
Equivalently, the heat dissipated to the environment can be bounded from below as
\begin{equation}
\Delta Q\ge \beta^{-1}\bra{-\Delta S+\frac{|\Delta S|^2}{\Upsilon}}.\label{eq:heat.bound}
\end{equation}
This bound is relevant to the inequality reported in Ref.~\cite{Timpanaro.2020.PRL}, $\Delta Q\ge\mca{Q}(\mca{S}^{-1}(-\Delta S))$.
However, the bound in Eq.~\eqref{eq:heat.bound} is not tight in the zero-temperature limit.

\subsection{Current-dissipation trade-off}
Here we derive a trade-off relation between the heat current and dissipation, i.e., we derive an upper bound on the ratio $J^2/\sigma_{\rm tot}$, where $J:=\Tr{H(t)\Dr{\rho}(t)}$ is the heat flow from the heat bath to the system and $\sigma_{\rm tot}$ is the total entropy production rate, which characterizes the irreversibility.
Using the relations $\Dr{\rho}(t)=-i[H(t),\rho(t)]+\mca{O}_\rho[\phi(t)]$ and $\sigma_{\rm tot}=\avg{\phi(t),\mca{O}_\rho[\phi(t)]}$ and applying the Cauchy--Schwarz inequality, we obtain
\begin{subequations}
\begin{align}
J^2=|\Tr{H(t)\Dr{\rho}(t)}|^2&=|\avg{H(t),\mca{O}_\rho[\phi(t)]}|^2\\
&\le\avg{H(t),\mca{O}_\rho[H(t)]}\avg{\phi(t),\mca{O}_\rho[\phi(t)]}\\
&=\avg{H(t),\mca{O}_\rho[H(t)]}\sigma_{\rm tot}.
\end{align}
\end{subequations}
Note that
\begin{subequations}
\begin{align}
\avg{H(t),\mca{O}_\rho[H(t)]}&=\sum_{\mu,\omega}e^{-\beta\omega/2}\alpha_\mu(\omega)\avg{[L_{\mu}^\dagger(\omega),H(t)],[\rho]_{\beta\omega}([L_{\mu}^\dagger(\omega),H(t)])}	\\
&=\sum_{\mu,\omega}e^{-\beta\omega/2}\alpha_\mu(\omega)\omega^2\avg{L_{\mu}^\dagger(\omega),[\rho]_{\beta\omega}(L_{\mu}^\dagger(\omega))}.
\end{align}
\end{subequations}
From Eq.~\eqref{eq:bound.tilted.op.c}, one can prove that
\begin{subequations}
\begin{align}
\avg{H(t),\mca{O}_\rho[H(t)]}&\le\frac{1}{2}\sum_{\mu,\omega}e^{-\beta\omega/2}\alpha_\mu(\omega)\omega^2\bras{e^{\beta\omega/2}\Tr{L_{\mu}^\dagger(\omega)L_{\mu}(\omega)\rho}+e^{-\beta\omega/2}\Tr{L_{\mu}(\omega)L_{\mu}^\dagger(\omega)\rho}}\\
&=\sum_{\mu,\omega}\alpha_\mu(\omega)\omega^2\Tr{L_{\mu}^\dagger(\omega)L_{\mu}(\omega)\rho}\\
&=\Tr{\msf{L}\rho},
\end{align}
\end{subequations}
where $\msf{L}:=\sum_{\mu,\omega}\alpha_\mu(\omega)\omega^2L_{\mu}^\dagger(\omega)L_{\mu}(\omega)$.
Decomposing $\rho=\rho_{\rm bd}+\rho_{\rm nd}$, where
\begin{subequations}
\begin{align}
\rho_{\rm bd}&=\sum_e\Pi_{e}\rho\Pi_{e},\\
\rho_{\rm nd}&=\sum_{e\neq e'}\Pi_{e}\rho\Pi_{e'},
\end{align}
\end{subequations}
and $\Pi_e$ is the projection to the eigenspace of $H$ with eigenvalue $e$.
As $[L_{\mu}^\dagger(\omega)L_{\mu}(\omega),H(t)]=0$, $[L_{\mu}^\dagger(\omega)L_{\mu}(\omega),\Pi_e]=0$ for all $e$.
Therefore, the coherence between eigenstates with different energies vanishes in $\Tr{\msf{L}\rho}$, i.e., $\Tr{\msf{L}\rho}=\Tr{\msf{L}\rho_{\rm bd}}$.
The trade-off relation between the heat current and dissipation is thus obtained as
\begin{equation}
\frac{J^2}{\sigma_{\rm tot}}\le\avg{H,\mca{O}_\rho(H)}\le\Tr{\msf{L}\rho_{\rm bd}}.
\end{equation}
This inequality, known as the current-dissipation trade-off relation \cite{Tajima.2020.arxiv}, implies that the ratio $J^2/\sigma_{\rm tot}$ is not enhanced by coherence between eigenstates with different energies, but is enhanced by coherence between degenerate energy eigenstates.

\subsection{Invalidity of the bound in terms of the relative entropy}
Here we prove that the total entropy production in thermalization processes cannot be bounded from below by the relative entropy between the initial and final states.
In thermalization processes, the dynamics of the density operator are governed by the Lindblad equation
\begin{equation}
\Dr{\rho}=-i[H,\rho]+\sum_{\mu,\omega}\alpha_\mu(\omega)\bras{2L_{\mu}(\omega)\rho L_{\mu}^\dagger(\omega)-\brab{L_{\mu}^\dagger(\omega)L_{\mu}(\omega),\rho}}.
\end{equation}
The total entropy production can be explicitly expressed as $\Delta S_{\rm tot}=S(\rho(0)||\rho^{\rm eq})-S(\rho(\tau)||\rho^{\rm eq})$, where $S(\rho_1 ||\rho_2):=\Tr{\rho_1(\ln\rho_1-\ln\rho_2)}$ is the relative entropy of $\rho_1$ with respect to $\rho_2$.
If the relative entropy satisfies the reverse triangle inequality:
\begin{equation}
S(\rho(0)||\rho^{\rm eq})\ge S(\rho(0)||\rho(\tau))+S(\rho(\tau)||\rho^{\rm eq}),
\end{equation}
then $\Delta S_{\rm tot}\ge S(\rho(0)||\rho(\tau))$ and the dissipation can be further bound by the quantum Fisher information and Wigner--Yanase metrics \cite{Mancino.2018.PRL}.
However, this inequality holds in the classical case \cite{Shiraishi.2019.PRL} but not in the general quantum case.
As a simple counterexample, consider that $\alpha_\mu(\omega)\to 0$ for all $\mu$ and $\omega$.
In this vanishing coupling limit, the total entropy production vanishes because the relative entropy is invariant under a unitary transform.
On the other hand, $S(\rho(0)||\rho(\tau))$ is always positive because $\rho(t)$ is changed under the internal dynamics; thus $\Delta S_{\rm tot}<S(\rho(0)||\rho(\tau))$.

\subsection{Quantum Otto heat engine}
Consider a quantum Otto heat engine consisting of a two-level atom, whose energy levels (the excited state $\Kt{e}$ and the ground state $\Kt{g}$) are changed by an external controller.
The atom is alternatively coupled with two heat baths at different inverse temperatures $\beta_h<\beta_c$, and undergoes two isochoric and two adiabatic processes.
The system Hamiltonian is given by $H(t)=\omega(t)\sigma_z/2$, where $\sigma_z=\Mat{e}-\Mat{g}$ is the Pauli matrix in the $z$ direction.
The heat engine is cyclically operated as follows:
\begin{itemize}
	\item[1.] During adiabatic expansion in time $\tau_a$, the frequency changes from $\omega_h$ to $\omega_c$, and work is produced due to the change in internal energy. Here, the word {\it adiabatic} means that the system is isolated from the heat baths and there is no heat exchange during the process, although jumps between energy eigenstates may occur.
	\item[2.] During the cold isochore in time $\tau_c$, the atom is in contact with the cold bath and the frequency $\omega_c$ remains unchanged. In this process, heat $Q_c$ is transferred from the working medium to the cold bath.
	\item[3.] During adiabatic compression in time $\tau_a$, the frequency is reversed from $\omega_c$ to $\omega_h$ and work is done on the medium.
	\item[4.] During the hot isochore in time $\tau_h$, the atom is in contact with the hot bath and the frequency $\omega_h$ is fixed. In this process, heat $Q_h$ is extracted from the hot bath by the working medium.
\end{itemize}
During the adiabatic process, the dynamics of the density matrix are governed by the von Neumann equation
\begin{equation}
\Dr{\rho}(t)=-i[H(t),\rho(t)].
\end{equation}
During an isochoric process $k=h$ or $c$, the atom is thermalized and the time evolution of the density matrix follows the Lindblad master equation \cite{Breuer.2002}:
\begin{equation}
\Dr{\rho}(t)=-i[H_k,\rho(t)]+\mca{D}_k[\rho(t)],\label{eq:Lin.iso}
\end{equation}
where the dissipator $\mca{D}_k[\cdot]$ is defined by
\begin{equation}
\mca{D}_k[\rho]=\alpha_k\bar{n}(\omega_k)(2\sigma_+\rho\sigma_--\{\sigma_-\sigma_+,\rho\})+\alpha_k(\bar{n}(\omega_k)+1)(2\sigma_-\rho\sigma_+-\{\sigma_+\sigma_-,\rho\}).
\end{equation}
Here, $H_k=\omega_k\sigma_z/2$, $\sigma_{\pm}=(\sigma_x\pm i\sigma_y)/2$, $\alpha_k$ is a positive damping rate, and $\bar{n}(\omega_k)=(e^{\beta_k\omega_k}-1)^{-1}$ is the Planck distribution.

\subsubsection{Analytical solution of the density matrix}
In the stationary state, the density matrix can be analytically calculated as
\begin{equation}
\rho(t)=(e^{\lambda(t)}+e^{-\lambda(t)})^{-1}e^{\lambda(t)\sigma_z},\label{eq:den.mat.ans}
\end{equation}
where $\lambda(t)$ is a periodic function satisfying $\lambda(t+\tau)=\lambda(t)$ with $\tau=2\tau_a+\tau_h+\tau_c$.
In the following analysis, we determine the analytical form of $\lambda(t)$.
For any operators $X$ and $Y$, one can prove that
\begin{equation}
e^{-\lambda X}Ye^{\lambda X}=\sum_{n=0}^{\infty}\frac{(-\lambda)^n}{n!}[X,Y]_n,
\end{equation}
where the nested commutator is recursively defined as $[X,Y]_n=[X,[X,Y]_{n-1}]$ and $[X,Y]_0=Y$.
Using the relations $[\sigma_z,\sigma_{+}]=2\sigma_{+}$ and $[\sigma_z,\sigma_{-}]=-2\sigma_{-}$, one readily obtains
\begin{subequations}
\begin{align}
[\sigma_z,\sigma_{+}]_n&=(+2)^n\sigma_{+},\\
[\sigma_z,\sigma_{-}]_n&=(-2)^n\sigma_{-}.
\end{align}
\end{subequations}
Subsequently,
\begin{subequations}
\begin{align}
e^{-\lambda\sigma_z}\sigma_{+}e^{\lambda\sigma_z}&=\sum_{n=0}^{\infty}\frac{(-\lambda)^n(2)^n}{n!}\sigma_{+}=e^{-2\lambda}\sigma_{+}\Leftrightarrow e^{\lambda\sigma_z}\sigma_{+}=e^{2\lambda}\sigma_{+}e^{\lambda\sigma_z},\\
e^{-\lambda\sigma_z}\sigma_{-}e^{\lambda\sigma_z}&=\sum_{n=0}^{\infty}\frac{(-\lambda)^n(-2)^n}{n!}\sigma_{-}=e^{2\lambda}\sigma_{-}\Leftrightarrow e^{\lambda\sigma_z}\sigma_{-}=e^{-2\lambda}\sigma_{-}e^{\lambda\sigma_z}.
\end{align}
\end{subequations}
Noting that $\sigma_+\sigma_-=(\mbb{I}_2+\sigma_z)/2$ and $\sigma_-\sigma_+=(\mbb{I}_2-\sigma_z)/2$, the dissipator term can be calculated as
\begin{subequations}\label{eq:dis.term}
\begin{align}
\mca{D}_k[e^{\lambda\sigma_z}]&=\alpha_k\bar{n}(\omega_k)(2\sigma_+e^{\lambda\sigma_z}\sigma_--\{\sigma_-\sigma_+,e^{\lambda\sigma_z}\})+\alpha_k(\bar{n}(\omega_k)+1)(2\sigma_-e^{\lambda\sigma_z}\sigma_+-\{\sigma_+\sigma_-,e^{\lambda\sigma_z}\})\\
&=2\bras{\alpha_k\bar{n}(\omega_k)(e^{-2\lambda}\sigma_+\sigma_--\sigma_-\sigma_+)+\alpha_k(\bar{n}(\omega_k)+1)(e^{2\lambda}\sigma_-\sigma_+-\sigma_+\sigma_-)}e^{\lambda\sigma_z}\\
&=\alpha_k\bras{\bar{n}(\omega_k)(e^{-2\lambda}-1)+(\bar{n}(\omega_k)+1)(e^{2\lambda}-1)+\brab{\bar{n}(\omega_k)(e^{-2\lambda}+1)-(\bar{n}(\omega_k)+1)(e^{2\lambda}+1)}\sigma_z}e^{\lambda\sigma_z}
\end{align}
\end{subequations}
Inserting Eqs.~\eqref{eq:den.mat.ans} and \eqref{eq:dis.term} into Eq.~\eqref{eq:Lin.iso}, we obtain
\begin{equation}
\bras{-\frac{e^{\lambda}-e^{-\lambda}}{e^{\lambda}+e^{-\lambda}}+\sigma_z}\Dr{\lambda}(t)=\alpha_k\bras{\bar{n}(\omega_k)(e^{-2\lambda}-1)+(\bar{n}(\omega_k)+1)(e^{2\lambda}-1)+\brab{\bar{n}(\omega_k)(e^{-2\lambda}+1)-(\bar{n}(\omega_k)+1)(e^{2\lambda}+1)}\sigma_z},
\end{equation}
which is satisfied if $\lambda(t)$ obeys the following differential equation:
\begin{equation}
\Dr{\lambda}(t)=\alpha_k\bras{\bar{n}(\omega_k)(e^{-2\lambda}+1)-(\bar{n}(\omega_k)+1)(e^{2\lambda}+1)}.
\end{equation}
This equation can be analytically solved for $\lambda(t)$.
The result is
\begin{equation}
\lambda(t)=\begin{cases}
\lambda(\tau_a), & {\rm if}~0\le t<\tau_a,\\
\frac{1}{2}\ln\bras{\frac{\bar{n}(\omega_c)e^{2\alpha_c(2\bar{n}(\omega_c)+1)(t-\tau_a)}-z_c}{(\bar{n}(\omega_c)+1)e^{2\alpha_c(2\bar{n}(\omega_c)+1)(t-\tau_a)}+z_c}}, & {\rm if}~\tau_a\le t<\tau_a+\tau_c,\\
\lambda(2\tau_a+\tau_c), & {\rm if}~\tau_a+\tau_c\le t<2\tau_a+\tau_c,\\
\frac{1}{2}\ln\bras{\frac{\bar{n}(\omega_h)e^{2\alpha_h(2\bar{n}(\omega_h)+1)(t-2\tau_a-\tau_c)}-z_h}{(\bar{n}(\omega_h)+1)e^{2\alpha_h(2\bar{n}(\omega_h)+1)(t-2\tau_a-\tau_c)}+z_h}}, & {\rm if}~2\tau_a+\tau_c\le t< \tau,
\end{cases}
\end{equation}
where the constant $z_k$ can be explicitly determined through the boundary conditions as
\begin{subequations}
\begin{align}
z_c&=[2\bar{n}(\omega_c)+1]\bra{\frac{\bar{n}(\omega_c)}{2\bar{n}(\omega_c)+1}-\frac{\bar{n}(\omega_h)}{2\bar{n}(\omega_h)+1}}\Bigg/\bras{1+\frac{1-e^{-2\alpha_c(2\bar{n}(\omega_c)+1)\tau_c}}{e^{2\alpha_h(2\bar{n}(\omega_h)+1)\tau_h}-1}},\\
z_h&=[2\bar{n}(\omega_h)+1]\bra{\frac{\bar{n}(\omega_h)}{2\bar{n}(\omega_h)+1}-\frac{\bar{n}(\omega_c)}{2\bar{n}(\omega_c)+1}}\Bigg/\bras{1+\frac{1-e^{-2\alpha_h(2\bar{n}(\omega_h)+1)\tau_h}}{e^{2\alpha_c(2\bar{n}(\omega_c)+1)\tau_c}-1}}.
\end{align}
\end{subequations}

\subsubsection{Thermodynamics and efficiency}
For each $1\le i\le 4$, let $\rho_i$ denote the density matrix at the beginning of process $i$.
Note that the density matrix is unchanged during the adiabatic processes, i.e., $\rho_1=\rho_2$ and $\rho_3=\rho_4$.
During an isochoric process, a heat quantity $Q_c=\Tr{H_c(\rho_2-\rho_3)}$ is transferred to the cold bath, or a heat quantity $Q_h=\Tr{H_h(\rho_1-\rho_4)}$ is extracted from the hot bath.
The total work $W$ extracted from the working medium is
\begin{equation}
-W=\int_0^{\tau_a}\Tr{\pp_tH(t)\rho(t)}dt+\int_{\tau_a+\tau_c}^{\tau-\tau_h}\Tr{\pp_tH(t)\rho(t)}dt=\Tr{\rho_1(H_c-H_h)}+\Tr{\rho_3(H_h-H_c)}.
\end{equation}
By conservation of energy, we have $-W+Q_h-Q_c=0$.
The efficiency $\eta$ is then defined as
\begin{equation}
\eta:=\frac{W}{Q_h}=1-\frac{Q_c}{Q_h}.
\end{equation}
The total entropy produced during the isochoric processes is
\begin{subequations}
\begin{align}
\Delta S_{\rm tot}^{h}&=\Delta S_{h}-\beta_hQ_h\ge 0,\\
\Delta S_{\rm tot}^{c}&=\Delta S_{c}+\beta_cQ_c\ge 0,
\end{align}
\end{subequations}
where $\Delta S_{h}=\Tr{\rho_4\ln\rho_4}-\Tr{\rho_1\ln\rho_1}$ and $\Delta S_{c}=\Tr{\rho_2\ln\rho_2}-\Tr{\rho_3\ln\rho_3}$ are the changes in the von Neumann entropy during the hot and cold isochoric processes, respectively.
As $\Delta S_{h}+\Delta S_{c}=0$, $\beta_cQ_c-\beta_hQ_h\ge 0$ follows from the second law of thermodynamics.
Using this inequality, one can prove that $\eta$ cannot exceed the Carnot efficiency, given by
\begin{equation}
\eta\le 1-\frac{\beta_h}{\beta_c}=:\eta_{\rm C}.
\end{equation}
In the following, we tighten the bound on the efficiency.
According to Eqs.~\TraceBoundQua~ and \EnerBoundQua~ in the main text, the total entropy productions during the isochoric processes are bounded from below by the distances $\msf{d}_{\rm T}(\cdot,\cdot)$ and $\msf{d}_{\rm E}(\cdot,\cdot)$ as follows:
\begin{align}
\Delta S_{\rm tot}^{h}&=\Delta S_{h}-\beta_hQ_h\ge\max\brab{\frac{\msf{d}_{\rm T}(\rho_1,\rho_4)^2}{4\tau_h\mca{A}_{\rm T}^h},\frac{\msf{d}_{\rm E}(\rho_1,\rho_4)^2}{\tau_h\mca{A}_{\rm E}^h}},\label{eq:bound.hot}\\
\Delta S_{\rm tot}^{c}&=\Delta S_{c}+\beta_cQ_c\ge\max\brab{\frac{\msf{d}_{\rm T}(\rho_2,\rho_3)^2}{4\tau_c\mca{A}_{\rm T}^c},\frac{\msf{d}_{\rm E}(\rho_2,\rho_3)^2}{\tau_c\mca{A}_{\rm E}^c}}.\label{eq:bound.cold}
\end{align}
Here, $\mca{A}_{\rm T}^k:=\alpha_k(2\bar{n}(\omega_k)+1)$ and $\mca{A}_{\rm E}^k:=\omega_k^2\alpha_k(2\bar{n}(\omega_k)+1)$ for $k=h$ or $c$.
From Eqs.~\eqref{eq:bound.hot} and \eqref{eq:bound.cold}, we obtain
\begin{equation}
\beta_cQ_c-\beta_hQ_h\ge\max\brab{\frac{\msf{d}_{\rm T}(\rho_1,\rho_4)^2}{4\tau_h\mca{A}_{\rm T}^h},\frac{\msf{d}_{\rm E}(\rho_1,\rho_4)^2}{\tau_h\mca{A}_{\rm E}^h}}+\max\brab{\frac{\msf{d}_{\rm T}(\rho_2,\rho_3)^2}{4\tau_c\mca{A}_{\rm T}^c},\frac{\msf{d}_{\rm E}(\rho_2,\rho_3)^2}{\tau_c\mca{A}_{\rm E}^c}}=:\mfr{g}.
\end{equation}
Consequently, a tighter bound on $\eta$ is obtained as
\begin{equation}
\eta\le\eta_{\rm C}-\frac{\mfr{g}}{\beta_cQ_h}=:\eta_{\rm G}.
\end{equation}

\section{Classical Markov jump processes}
\subsection{Alternative expression of the classical master equation}

We now show that the master equation $\Dr{\mbm{p}}=\msf{R}\mbm{p}$ can be expressed as $\Dr{\mbm{p}}=\msf{K}_p\mbm{f}$, where $\msf{R}=[R_{mn}]$ with $R_{nn}=-\sum_{m(\neq n)}R_{mn}$, $\msf{K}_p=\sum_{1\le n<m\le N}R_{nm}\peq_m\Phi\bra{\frac{p_n}{\peq_n},\frac{p_m}{\peq_m}}\msf{E}_{nm}$, and $\mbm{f}=-\nabla_{\mbm{p}} D(\mbm{p}||\mbm{p}^{\rm eq})$.
Here, $\nabla_{\mbm{p}}:=[\pp_{p_1},\dots,\pp_{p_N}]^\top$.
Specifically, we need to show that
\begin{equation}
(\msf{K}_p\mbm{f})_n=\sum_{m(\neq n)}[R_{nm}p_m-R_{mn}p_n]\label{eq:mastereq.row}
\end{equation}
holds for all $n$.
Indeed, using the relations $f_n=-(\ln p_n-\ln\peq_n - 1)$ and $R_{nm}\peq_m=R_{mn}\peq_n$, Eq.~\eqref{eq:mastereq.row} can be verified as follows:
\begin{subequations}
\begin{align}
(\msf{K}_p\mbm{f})_n&=\sum_{m(\neq n)}R_{nm}\peq_m\Phi\bra{\frac{p_n}{\peq_n},\frac{p_m}{\peq_m}}(\msf{E}_{nm}\mbm{f})_n\\
&=\sum_{m(\neq n)}R_{nm}\peq_m\Phi\bra{\frac{p_n}{\peq_n},\frac{p_m}{\peq_m}}(f_n-f_m)\\
&=\sum_{m(\neq n)}R_{nm}\peq_m\frac{p_n/\peq_n-p_m/\peq_m}{\ln p_n-\ln\peq_n-\ln p_m+\ln\peq_m}(\ln p_m-\ln\peq_m-\ln p_n+\ln\peq_n)\\
&=\sum_{m(\neq n)}R_{nm}\peq_m\bra{\frac{p_m}{\peq_m}-\frac{p_n}{\peq_n}}\\
&=\sum_{m(\neq n)}[R_{nm}p_m-R_{mn}p_n].
\end{align}
\end{subequations}

\subsection{Properties of the matrix $\msf{K}_p$}
The matrix $\msf{K}_p$ is symmetric and positive semi-definite.
Its properties are given below.
\begin{lemma}
For an arbitrary distribution $\mbm{p}$ satisfying $p_n>0$ for all $n$, ${\rm ker}(\msf{K}_p)=\{\mbm{v}\in\mbb{R}^{N\times 1}~|~\mbm{v}\propto\mbm{1}\}$.
\end{lemma}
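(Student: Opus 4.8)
The plan is to interpret $\msf{K}_p$ as a weighted graph Laplacian and to reduce the kernel computation to a connectivity argument that invokes ergodicity. First I would read off from the row-wise identity $(\msf{E}_{nm}\mbm{f})_n=f_n-f_m$ used above that $\msf{E}_{nm}=(\mbm{e}_n-\mbm{e}_m)(\mbm{e}_n-\mbm{e}_m)^\top$, where $\{\mbm{e}_n\}$ is the standard basis of $\mbb{R}^{N\times 1}$; this is the symmetric, rank-one choice consistent with the stated symmetry and positive semi-definiteness of $\msf{K}_p$. Writing $c_{nm}:=R_{nm}\peq_m\Phi(p_n/\peq_n,p_m/\peq_m)$ and using the detailed-balance relation $R_{nm}\peq_m=R_{mn}\peq_n$ together with the symmetry $\Phi(x,y)=\Phi(y,x)$, the matrix becomes $\msf{K}_p=\sum_{1\le n<m\le N}c_{nm}(\mbm{e}_n-\mbm{e}_m)(\mbm{e}_n-\mbm{e}_m)^\top$, so that its quadratic form reads
\begin{equation}
\mbm{v}^\top\msf{K}_p\mbm{v}=\sum_{1\le n<m\le N}c_{nm}(v_n-v_m)^2.
\end{equation}
Since $\msf{K}_p$ is symmetric and positive semi-definite, $\mbm{v}\in{\rm ker}(\msf{K}_p)$ if and only if $\mbm{v}^\top\msf{K}_p\mbm{v}=0$.

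Next I would pin down the signs of the weights $c_{nm}$. The off-diagonal rates satisfy $R_{nm}\ge 0$, the equilibrium weights obey $\peq_m>0$, and $p_n>0$ by hypothesis, so each argument of the logarithmic mean is positive and hence $\Phi(p_n/\peq_n,p_m/\peq_m)>0$. Consequently $c_{nm}\ge 0$, with $c_{nm}>0$ precisely when $R_{nm}>0$. Detailed balance, together with positivity of $\peq$, further guarantees $R_{nm}>0\Leftrightarrow R_{mn}>0$, so the pairs $\{n,m\}$ with $c_{nm}>0$ form an undirected graph $G$ on the $N$ states.

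Finally comes the kernel argument itself. The inclusion $\{\mbm{v}\in\mbb{R}^{N\times 1}~|~\mbm{v}\propto\mbm{1}\}\subseteq{\rm ker}(\msf{K}_p)$ is immediate, since each difference $v_n-v_m$ vanishes for a constant vector. For the reverse inclusion, vanishing of the quadratic form forces $v_n=v_m$ for every edge $\{n,m\}$ of $G$, so $\mbm{v}$ is constant on each connected component of $G$. The one substantive input, and the step I expect to be the main obstacle, is to upgrade these pairwise equalities to global constancy of $\mbm{v}$; this is exactly where ergodicity enters, as irreducibility of the Markov process means $G$ is connected. Any two states are then joined by a path along which $\mbm{v}$ is constant, forcing $\mbm{v}=v_1\mbm{1}$. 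This is the classical counterpart of the ergodicity argument used in proposition~\ref{proposition:vanish}, and it yields ${\rm ker}(\msf{K}_p)=\{\mbm{v}\in\mbb{R}^{N\times 1}~|~\mbm{v}\propto\mbm{1}\}$, as claimed.
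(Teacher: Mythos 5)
Your proof is correct and follows essentially the same route as the paper's: both express $\avg{\mbm{v},\msf{K}_p\mbm{v}}$ as the nonnegative sum $\sum_{n<m}R_{nm}\peq_m\Phi(p_n/\peq_n,p_m/\peq_m)(v_n-v_m)^2$, force each term to vanish, and invoke irreducibility (connectivity of the graph of nonzero rates) to conclude $\mbm{v}\propto\mbm{1}$. Your write-up merely makes explicit some steps the paper leaves implicit, namely the rank-one form of $\msf{E}_{nm}$, the positive semi-definiteness argument, and the sign analysis of the weights.
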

\begin{proof}
As the system is irreducible, there exists a set of $N-1$ unordered pairs, $\mca{Y}=\{(i,j)\,|\,R_{ij}\neq 0\}$, such that for arbitrary states $n\neq m$, there is a path $n=i_0\to i_1\to\dots\to i_k=m$ and $(i_{l},i_{l+1})\in\mca{Y}$ for all $0\le l\le k-1$.
Assuming $\mbm{v}\in{\rm ker}(\msf{K}_p)$, we have
\begin{equation}
0=\avg{\mbm{v},\msf{K}_p\mbm{v}}=\sum_{1\le n<m\le N}R_{nm}\peq_m\Phi\bra{\frac{p_n}{\peq_n},\frac{p_m}{\peq_m}}(v_m-v_n)^2.
\end{equation}
This expression means that $v_i-v_j=0$ for all $(i,j)\in\mca{Y}$, or equivalently, $\mbm{v}\propto\mbm{1}$.
\end{proof}

\begin{lemma}
There exists a vector $\mbm{v}$ for which $\Dr{\mbm{p}}=\msf{K}_p\mbm{v}$.
Such a vector is unique under the condition $\avg{\mbm{1},\mbm{v}}=0$.
\end{lemma}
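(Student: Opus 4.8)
The plan is to exploit that $\msf{K}_p$ is real symmetric, hence self-adjoint with respect to $\avg{\cdot,\cdot}$, and that by the preceding lemma its kernel is exactly ${\rm ker}(\msf{K}_p)={\rm span}\{\mbm{1}\}$. For such an operator the fundamental orthogonal decomposition $\mbb{R}^{N\times 1}={\rm ker}(\msf{K}_p)\oplus{\rm range}(\msf{K}_p)$ holds, so that ${\rm range}(\msf{K}_p)={\rm ker}(\msf{K}_p)^\perp=\{\mbm{u}\in\mbb{R}^{N\times 1}~|~\avg{\mbm{1},\mbm{u}}=0\}$. Existence of $\mbm{v}$ therefore reduces to checking that $\Dr{\mbm{p}}$ lies in this hyperplane, i.e., that $\avg{\mbm{1},\Dr{\mbm{p}}}=0$.

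First I would establish this orthogonality from conservation of probability. Since $\Dr{\mbm{p}}=\msf{R}\mbm{p}$ and the rate matrix obeys $R_{nn}=-\sum_{m(\neq n)}R_{mn}$, every column of $\msf{R}$ sums to zero, i.e., $\mbm{1}^\top\msf{R}=\mbm{0}^\top$. Hence $\avg{\mbm{1},\Dr{\mbm{p}}}=\mbm{1}^\top\msf{R}\mbm{p}=0$, which places $\Dr{\mbm{p}}\in{\rm range}(\msf{K}_p)$ and guarantees that some $\mbm{v}$ with $\Dr{\mbm{p}}=\msf{K}_p\mbm{v}$ exists. To arrange the gauge condition, I would then subtract the component of $\mbm{v}$ along $\mbm{1}$, replacing $\mbm{v}$ by $\mbm{v}-N^{-1}\avg{\mbm{1},\mbm{v}}\mbm{1}$; because $\msf{K}_p\mbm{1}=0$ this remains a solution and now satisfies $\avg{\mbm{1},\mbm{v}}=0$.

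For uniqueness under the constraint, suppose $\msf{K}_p\mbm{v}_1=\msf{K}_p\mbm{v}_2=\Dr{\mbm{p}}$ with $\avg{\mbm{1},\mbm{v}_1}=\avg{\mbm{1},\mbm{v}_2}=0$. Then $\mbm{v}_1-\mbm{v}_2\in{\rm ker}(\msf{K}_p)={\rm span}\{\mbm{1}\}$, so $\mbm{v}_1-\mbm{v}_2=c\mbm{1}$ for some $c\in\mbb{R}$; taking the inner product with $\mbm{1}$ yields $0=\avg{\mbm{1},\mbm{v}_1-\mbm{v}_2}=cN$, whence $c=0$ and $\mbm{v}_1=\mbm{v}_2$. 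No step here is genuinely hard; the one conceptual point is recognizing that the symmetry of $\msf{K}_p$ supplies the orthogonal range–kernel decomposition, which reduces existence to the single probability-conservation identity $\avg{\mbm{1},\Dr{\mbm{p}}}=0$. This argument mirrors the earlier quantum lemma for $\mca{K}_\rho$ but is considerably simpler, since the classical operator $\msf{K}_p$ is manifestly symmetric and one need not build an explicit basis as in the generalized Gell-Mann construction.
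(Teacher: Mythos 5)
Your proof is correct and follows essentially the same route as the paper: the paper invokes the Fredholm alternative where you spell out the equivalent range--kernel orthogonal decomposition for the symmetric matrix $\msf{K}_p$, and your gauge-fixing step $\mbm{v}\mapsto\mbm{v}-N^{-1}\avg{\mbm{1},\mbm{v}}\mbm{1}$ and uniqueness argument are identical to the paper's. Your only addition is making explicit the probability-conservation identity $\mbm{1}^\top\msf{R}=\mbm{0}^\top$ behind $\avg{\mbm{1},\Dr{\mbm{p}}}=0$, which the paper uses implicitly.
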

\begin{proof}
For any $\mbm{v}$ satisfying $\msf{K}_p\mbm{v}=0$ [i.e., $\mbm{v}\in {\rm ker}(\msf{K}_p)$], then $\mbm{v}\propto\mbm{1}\Rightarrow\mbm{v}^\top\Dr{\mbm{p}}=0$; equivalently, $\Dr{\mbm{p}}\in{\rm ker}(\msf{K}_p)^\perp$.
According to the Fredholm alternative, the equation $\Dr{\mbm{p}}=\msf{K}_p\mbm{v}$ always has a nonzero solution $\mbm{v}$.
Defining $\overline{\mbm{v}}:=\mbm{v}-N^{-1}\avg{\mbm{1},\mbm{v}}\mbm{1}$, we can write $\Dr{\mbm{p}}=\msf{K}_p\overline{\mbm{v}}$ and $\avg{\mbm{1},\overline{\mbm{v}}}=0$.
Assume that there exist two solutions $\mbm{v}_1$ and $\mbm{v}_2$ satisfying $\avg{\mbm{1},\mbm{v}_1}=\avg{\mbm{1},\mbm{v}_2}=0$.
We then have $\msf{K}_p(\mbm{v}_1-\mbm{v}_2)=0\Rightarrow\mbm{v}_1-\mbm{v}_2=c\mbm{1}$ for some $c\in\mbb{R}$.
Moreover, $\avg{\mbm{1},\mbm{v}_1-\mbm{v}_2}=0\Rightarrow Nc=0\Rightarrow c=0$, which proves the uniqueness of $\mbm{v}$.
\end{proof}

\subsection{Geodesic equation of the modified Wasserstein distance}
We here derive the geodesic equation that determines the shortest path between two distributions $\mbm{p}_0$ and $\mbm{p}_\tau$.
To this end, we consider the following functional, which is minimized along the geodesic path $\{\mbm{p}(t)\}_{0\le t\le\tau}$:
\begin{equation}
\mca{J}[\mbm{p}(t)]=\int_0^\tau\avg{\mbm{v}(t),\msf{K}_p\mbm{v}(t)}dt,
\end{equation}
where $\mbm{v}(t)$ and $\mbm{p}(t)$ are related through $\Dr{\mbm{p}}(t)=\msf{K}_p\mbm{v}(t)$.
Consider an arbitrary perturbation path $\{\mbm{q}(t)\}_{0\le t\le \tau}$ that satisfies $\mbm{q}(0)=\mbm{q}(\tau)=0$ and $\sum_nq_n(t)=0$ for all $0\le t\le \tau$.
Because the functional $\mca{J}[\mbm{\gamma}(t)]$ is minimized when $\mbm{\gamma}=\mbm{p}$, the function $\Theta(\epsilon)=\mca{J}[\mbm{p}(t)+\epsilon \mbm{q}(t)]$ has a minimum at $\epsilon=0$, so $\Theta'(0)=0$.
The functional evaluated at $\mbm{\gamma}=\mbm{p}+\epsilon\mbm{q}$ can be written as
\begin{align}
\mca{J}[\mbm{p}(t)+\epsilon \mbm{q}(t)]=\int_0^\tau\avg{\mbm{\vartheta}(t),\msf{K}_{p+\epsilon q}\mbm{\vartheta}(t)}dt,\label{eq:functional}
\end{align}
where $\mbm{\vartheta}(t)$ is determined from $\Dr{\mbm{p}}(t)+\epsilon\Dr{\mbm{q}}(t)=\msf{K}_{p+\epsilon q}\mbm{\vartheta}(t)$.
From Eq.~\eqref{eq:functional}, we have
\begin{equation}
0=\Theta'(0)=\int_0^\tau\bras{\avg{\pp_\epsilon\mbm{\vartheta}(t),\msf{K}_p\mbm{v}(t)}+\avg{\mbm{v}(t),\pp_\epsilon\msf{K}_{p+\epsilon q}\mbm{v}(t)}+\avg{\mbm{v}(t),\msf{K}_p\pp_\epsilon\mbm{\vartheta}(t)}}_{\epsilon=0}dt.\label{eq:pardiff.eps}
\end{equation}
Hereafter, we omit the notation of evaluating at $\epsilon=0$ for conciseness.
The first and third terms in Eq.~\eqref{eq:pardiff.eps} are equal by symmetry of $\msf{K}_p$; that is, $\avg{\pp_\epsilon\mbm{\vartheta}(t),\msf{K}_p\mbm{v}(t)}=\avg{\mbm{v}(t),\msf{K}_p\pp_\epsilon\mbm{\vartheta}(t)}$.
Taking the partial derivative of both sides of $\Dr{\mbm{p}}(t)+\epsilon\Dr{\mbm{q}}(t)=\msf{K}_{p+\epsilon q}\mbm{\vartheta}(t)$ with respect to $\epsilon$ and evaluating at $\epsilon=0$, we obtain
\begin{equation}
\Dr{\mbm{q}}(t)=\pp_\epsilon\msf{K}_{p+\epsilon q}\mbm{v}(t)+\msf{K}_p\pp_\epsilon\mbm{\vartheta}(t)\Rightarrow\avg{\mbm{v}(t),\msf{K}_p\pp_\epsilon\mbm{\vartheta}(t)}=\avg{\mbm{v}(t),\Dr{\mbm{q}}(t)}-\avg{\mbm{v}(t),\pp_\epsilon\msf{K}_{p+\epsilon q}\mbm{v}(t)}.\label{eq:pardiff.master}
\end{equation}
From Eqs.~\eqref{eq:pardiff.eps} and \eqref{eq:pardiff.master}, we have
\begin{equation}
0=\int_0^\tau\bras{2\avg{\mbm{v}(t),\Dr{\mbm{q}}(t)}-\avg{\mbm{v}(t),\pp_\epsilon\msf{K}_{p+\epsilon q}\mbm{v}(t)}}dt=-\int_0^\tau\bras{2\avg{\Dr{\mbm{v}}(t),\mbm{q}(t)}+\avg{\mbm{v}(t),\pp_\epsilon\msf{K}_{p+\epsilon q}\mbm{v}(t)}}dt.\label{eq:tmp1}
\end{equation}
Since
\begin{equation}
\pp_\epsilon\msf{K}_{p+\epsilon q}=\sum_{1\le n<m\le N}R_{nm}\peq_m\pp_\epsilon\Phi\bra{\frac{p_n+\epsilon q_n}{\peq_n},\frac{p_m+\epsilon q_m}{\peq_m}}\msf{E}_{nm},
\end{equation}
we have
\begin{equation}
\avg{\mbm{v}(t),\pp_\epsilon\msf{K}_{p+\epsilon q}\mbm{v}(t)}=\sum_{m,n}R_{mn}[v_m(t)-v_n(t)]^2\Psi\bra{\frac{p_n(t)}{\peq_n(t)},\frac{p_m(t)}{\peq_m(t)}}q_n(t)=\avg{\mbm{r}(t),\mbm{q}(t)},\label{eq:tmp2}
\end{equation}
where $\Psi(x,y)=[x-\Phi(x,y)]/[x(\ln x-\ln y)]$ and $r_n(t):=\sum_{m}R_{mn}[v_m(t)-v_n(t)]^2\Psi\bra{p_n(t)/\peq_n(t),p_m(t)/\peq_m(t)}$.
From Eqs.~\eqref{eq:tmp1} and \eqref{eq:tmp2}, we have
\begin{equation}
\int_0^\tau\avg{2\Dr{\mbm{v}}(t)+\mbm{r}(t),\mbm{q}(t)}dt=0.
\end{equation}
Because $\{\mbm{q}(t)\}_{0\le t\le\tau}$ is an arbitrary perturbation path, the term in the inner product must be zero, i.e., $\Dr{\mbm{v}}(t)+\mbm{r}(t)/2=0$.
Finally, the geodesic equation that determines the shortest path between states $\mbm{p}_0$ and $\mbm{p}_\tau$ is obtained as follows:
\begin{equation}
\begin{cases}
\Dr{\mbm{p}}(t)-\msf{K}_p\mbm{v}(t)=0,\\
\Dr{\mbm{v}}(t)+\displaystyle\frac{1}{2}\mbm{r}(t)=0,\label{eq:Wasserstein.geodesic.equ}
\end{cases}
\end{equation}
with boundary conditions $\mbm{p}(0)=\mbm{p}_0$ and $\mbm{p}(\tau)=\mbm{p}_\tau$.

\subsection{Lower bound of the modified Wasserstein distance in terms of the total variation distance}
Here we derive the lower bound of the Wasserstein distance in terms of the total variation distance, $\msf{d}_{\rm V}(\mbm{p},\mbm{q})=\sum_n|p_n-q_n|$.
In variational form, the distance $\msf{d}_{\rm V}(\mbm{p},\mbm{q})$ can be expressed as
\begin{equation}
\msf{d}_{\rm V}(\mbm{p},\mbm{q})=\max_{\|\mbm{w}\|_\infty\le 1}\,\{\mbm{w}^\top(\mbm{p}-\mbm{q})\}=\max_{\|\mbm{w}\|_\infty\le 1}\,\avg{\mbm{w},\mbm{p}-\mbm{q}},
\end{equation}
where the maximum is taken over all real vectors $\mbm{w}=[w_1,\dots,w_N]^\top$ and $\|\mbm{w}\|_\infty:=\max_n|w_n|$.
Equality is attained when $w_n={\rm sign}(p_n-q_n)$.
Here, the sign function ${\rm sign}(x)$ of $x$ is defined as ${\rm sign}(x)=1$ for $x\ge 0$ and $-1$ otherwise.
By definition of the modified Wasserstein distance, given a fixed positive number $\delta>0$, there exists a smooth curve $\mbm{p}(t)$ with end points $\mbm{p}_0$ and $\mbm{p}_\tau$ such that
\begin{equation}
\tau\int_0^\tau\avg{\mbm{v},\msf{K}_p\mbm{v}}dt\le\mca{W}_{\rm c}(\mbm{p}_0,\mbm{p}_\tau)^2+\delta.
\end{equation}
Here, $\mbm{v}(t)\in\mbb{R}^{N\times 1}$ is determined from $\Dr{\mbm{p}}(t)=\msf{K}_p\mbm{v}(t)$.
For an arbitrary vector $\mbm{w}$ with $\|\mbm{w}\|_\infty\le 1$, we have
\begin{subequations}\label{eq:bound.tot.var.dist}
\begin{align}
\avg{\mbm{w},\mbm{p}_\tau-\mbm{p}_0}&=\int_0^\tau\avg{\mbm{w},\msf{K}_p\mbm{v}}dt\\
&\le\bra{\int_0^\tau\avg{\mbm{w},\msf{K}_p\mbm{w}}dt}^{1/2}\bra{\int_0^\tau\avg{\mbm{v},\msf{K}_p\mbm{v}}dt}^{1/2}\\
&\le\bra{\tau^{-1}\int_0^\tau\avg{\mbm{w},\msf{K}_p\mbm{w}}dt}^{1/2}\bra{\mca{W}_{\rm c}(\mbm{p}_0,\mbm{p}_\tau)^2+\delta}^{1/2}.\label{eq:L1dist.bound.tmp1}
\end{align}
\end{subequations}
To further bound the first term in Eq.~\eqref{eq:L1dist.bound.tmp1}, we apply the inequalities $\Phi(x,y)\le(x+y)/2$ and $(w_n-w_m)^2\le 4$, and obtain
\begin{subequations}
\begin{align}
\avg{\mbm{w},\msf{K}_p\mbm{w}}&=\sum_{m>n}R_{nm}\peq_m\Phi\bra{\frac{p_n}{\peq_n},\frac{p_m}{\peq_m}}\avg{\mbm{w},\msf{E}_{nm}\mbm{w}}\\
&=\sum_{m>n}R_{nm}\peq_m\Phi\bra{\frac{p_n}{\peq_n},\frac{p_m}{\peq_m}}(w_n-w_m)^2\label{eq:class.temp1} \\
&\le 2\sum_{m>n}R_{nm}\peq_m\bra{\frac{p_n}{\peq_n}+\frac{p_m}{\peq_m}}\\
&=2\sum_{m>n}[R_{nm}p_m+R_{mn}p_n].
\end{align}
\end{subequations}
Consequently, we have
\begin{equation}
\mca{W}_{\rm c}(\mbm{p}_0,\mbm{p}_\tau)^2+\delta\ge\frac{\avg{\mbm{w},\mbm{p}_0-\mbm{p}_\tau}^2}{2\tau^{-1}\int_0^\tau\sum_{m>n}[R_{nm}(t)p_m(t)+R_{mn}(t)p_n(t)]dt}.
\end{equation}
Taking the maximum over all $\mbm{w}$ and the limit $\delta\to 0$, we obtain
\begin{equation}
\mca{W}_{\rm c}(\mbm{p}_0,\mbm{p}_\tau)^2\ge\frac{\msf{d}_{\rm V}(\mbm{p}_0,\mbm{p}_\tau)^2}{2\mca{A}_{\rm V}},\label{eq:ClaWasDisLowBou}
\end{equation}
where $\mca{A}_{\rm V}:=\tau^{-1}\int_0^\tau\sum_{m\neq n}R_{mn}(t)\gamma_n(t)dt$ is the average dynamical activity along the geodesic path $\{\mbm{\gamma}(t)\}_{0\le t\le \tau}$.
The dynamical activity characterizes the time scale of the system.
As it indicates the time-symmetric changes in the system, it plays important roles in nonequilibrium phenomena \cite{Maes.2020.PR}.
From Eqs.~\WassBoundCla~and \eqref{eq:ClaWasDisLowBou}, the classical speed limits of the state transformation are obtained as
\begin{equation}
\tau\ge\frac{\mca{W}_{\rm c}(\mbm{p}(0),\mbm{p}(\tau))^2}{\Delta S_{\rm tot}}\ge\frac{\msf{d}_{\rm V}(\mbm{p}(0),\mbm{p}(\tau))^2}{2\Delta S_{\rm tot}\mca{A}_{\rm V}}.\label{eq:ClaSpeLim}
\end{equation}
These inequalities imply a trade-off relation between the time needed to transform the system state and the physical quantities such as entropy production and dynamical activity.
Specifically, a fast transformation necessitates high dissipation and frenesy.
The last bound in the inequality \eqref{eq:ClaSpeLim} is analogous to, but distinct from, a bound derived in Ref.~\cite{Shiraishi.2018.PRL}.
In the earlier study, $\mca{A}_{\rm V}$ is replaced by the average dynamical activity along the path described by the time evolution of the system.

We can derive a bound that is tighter than Eq.~\eqref{eq:ClaWasDisLowBou}.
First, we divide the set of states $\mca{N}=\{1,\dots,N\}$ into two subsets $\mca{N}=\mca{X}_-\cup\mca{X}_+$, where
\begin{align}
\mca{X}_-&:=\{n~|~1\le n\le N,~ p_n(0)<p_n(\tau)\},\\
\mca{X}_+&:=\{n~|~1\le n\le N,~ p_n(0)\ge p_n(\tau)\}.
\end{align}
For convenience, we define ${\rm par}(n):=-1$ if $n\in\mca{X}_-$ and ${\rm par}(n):=1$ if $n\in\mca{X}_+$.
Then, $\msf{d}_{\rm V}(\mbm{p}_0,\mbm{p}_\tau)=\avg{\mbm{w},\mbm{p}_0-\mbm{p}_\tau}$, where $w_n={\rm par}(n)$.
From Eq.~\eqref{eq:class.temp1}, noticing that $(w_n-w_m)^2=0$ if $n$ and $m$ belong to the same subset $\mca{X}_-$ or $\mca{X}_+$, and $(w_n-w_m)^2=4$ if $n$ and $m$ belong to different subsets, we have
\begin{subequations}
\begin{align}
\avg{\mbm{w},\msf{K}_p\mbm{w}}&\le 2\sum_{m>n~{\&}~{\rm par}(m)\neq{\rm par}(n)}R_{nm}\peq_m\bra{\frac{p_n}{\peq_n}+\frac{p_m}{\peq_m}}\\
&=2\sum_{m\in\mca{X}_-,\,n\in\mca{X_+}}[R_{mn}p_n+R_{nm}p_m]=2\sum_{{\rm par}(m)\neq{\rm par}(n)}R_{mn}p_n.
\end{align}
\end{subequations}
Subsequently, we obtain a tighter bound as
\begin{equation}
\mca{W}_{\rm c}(\mbm{p}_0,\mbm{p}_\tau)^2\ge\frac{\msf{d}_{\rm V}(\mbm{p}_0,\mbm{p}_\tau)^2}{2\mca{A}_{\rm V}^{\rm par}},
\end{equation}
where $\mca{A}_{\rm V}^{\rm par}:=\tau^{-1}\int_0^\tau\sum_{{\rm par}(m)\neq{\rm par}(n)}R_{mn}(t)\gamma_n(t)dt$ is the average of the {\it partial} dynamical activity along the geodesic path $\{\mbm{\gamma}(t)\}_{0\le t\le \tau}$.
Since $\mca{A}_{\rm V}^{\rm par}$ involves only transition rates between states in $\mca{X}_-$ and $\mca{X}_+$, $\mca{A}_{\rm V}^{\rm par}\le \mca{A}_{\rm V}$.

Following the same approach (i.e., applying Eq.~\eqref{eq:bound.tot.var.dist} for the path described by the system dynamics), we obtain
\begin{subequations}
\begin{align}
\avg{\mbm{w},\mbm{p}_\tau-\mbm{p}_0}&=\int_0^\tau\avg{\mbm{w},\msf{K}_p\mbm{h}}dt\\
&\le\bra{\int_0^\tau\avg{\mbm{w},\msf{K}_p\mbm{w}}dt}^{1/2}\bra{\int_0^\tau\avg{\mbm{h},\msf{K}_p\mbm{h}}dt}^{1/2}\\
&=\bra{\int_0^\tau\avg{\mbm{w},\msf{K}_p\mbm{w}}dt}^{1/2}\sqrt{\Delta S_{\rm tot}}.
\end{align}
\end{subequations}
Analogously, we can prove that
\begin{equation}
\tau\ge\frac{\msf{d}_{\rm V}(\mbm{p}(0),\mbm{p}(\tau))^2}{2\Delta S_{\rm tot}\mca{A}^{\rm par}},\label{eq:tighter.csl}
\end{equation}
where $\mca{A}^{\rm par}:=\tau^{-1}\int_0^\tau\sum_{{\rm par}(m)\neq{\rm par}(n)}R_{mn}(t)p_n(t)dt$ is the average of the {\it partial} dynamical activity along the path described by the time evolution of the system.
Obviously, Eq.~\eqref{eq:tighter.csl} is tighter than the bound reported in Ref.~\cite{Shiraishi.2018.PRL}.
The newly obtained bound indicates that the speed of the state transformation is not constrained by the total dynamical activity, but by the partial dynamical activity induced by transitions between states in $\mca{X}_-$ and $\mca{X}_+$.

\subsection{Thermalization process of a three-level system}
Here we illustrate the derived bound on the thermalization process of a three-level system.
The transition rates are time-independent and equal to
\begin{equation}
R_{mn}=w_{mn}e^{\beta(\mca{E}_n-\mca{E}_m)/2}{\rm sech}[\beta(\mca{E}_n-\mca{E}_m)/2],
\end{equation}
where $w_{mn}=w_{nm}$ are nonnegative constants.
Evidently, the transition rates satisfy the detailed balance conditions $R_{mn}\peq_n=R_{nm}\peq_m$.
According to Eq.~\WassBoundCla~in the main text, the entropy production is bounded from below by the modified Wasserstein distance as
\begin{equation}
\Delta S_{\rm tot}\ge \frac{\mca{W}_{\rm c}(\mbm{p}(0),\mbm{p}(\tau))^2}{\tau}.
\end{equation}
The total entropy production can be explicitly expressed as $\Delta S_{\rm tot}=D(\mbm{p}(0)||\mbm{p}^{\rm eq})-D(\mbm{p}(\tau)||\mbm{p}^{\rm eq})$.
In thermalization processes satisfying the detailed balance conditions, Ref.~\cite{Shiraishi.2019.PRL} proved that the relative entropy satisfies the reverse triangle inequality:
\begin{equation}
D(\mbm{p}(0)||\mbm{p}^{\rm eq})\ge D(\mbm{p}(0)||\mbm{p}(\tau))+D(\mbm{p}(\tau)||\mbm{p}^{\rm eq}).
\end{equation}
Subsequently, the entropy production during thermalization processes is bounded from below by an information-theoretical quantity of the initial and final states, $\Delta S_{\rm tot}\ge D(\mbm{p}(0)||\mbm{p}(\tau))$.
For fixed transition rates, Fig.~\ref{fig:three.level} plots the entropy production, modified Wasserstein distance, and relative entropy as functions of time $\tau$.
In this figure, the distance term $\mca{W}_{\rm c}^2(\mbm{p}(0),\mbm{p}(\tau))/\tau$ and the relative entropy always lie below the entropy production $\Delta S_{\rm tot}$.
The modified Wasserstein distance is tight in the short-time regime, whereas the relative entropy saturates in the long-time limit.
Therefore, these two bounds complementarily characterize the irreversibility in thermalization processes.
\begin{figure}
\centering
	\includegraphics[width=0.8\linewidth]{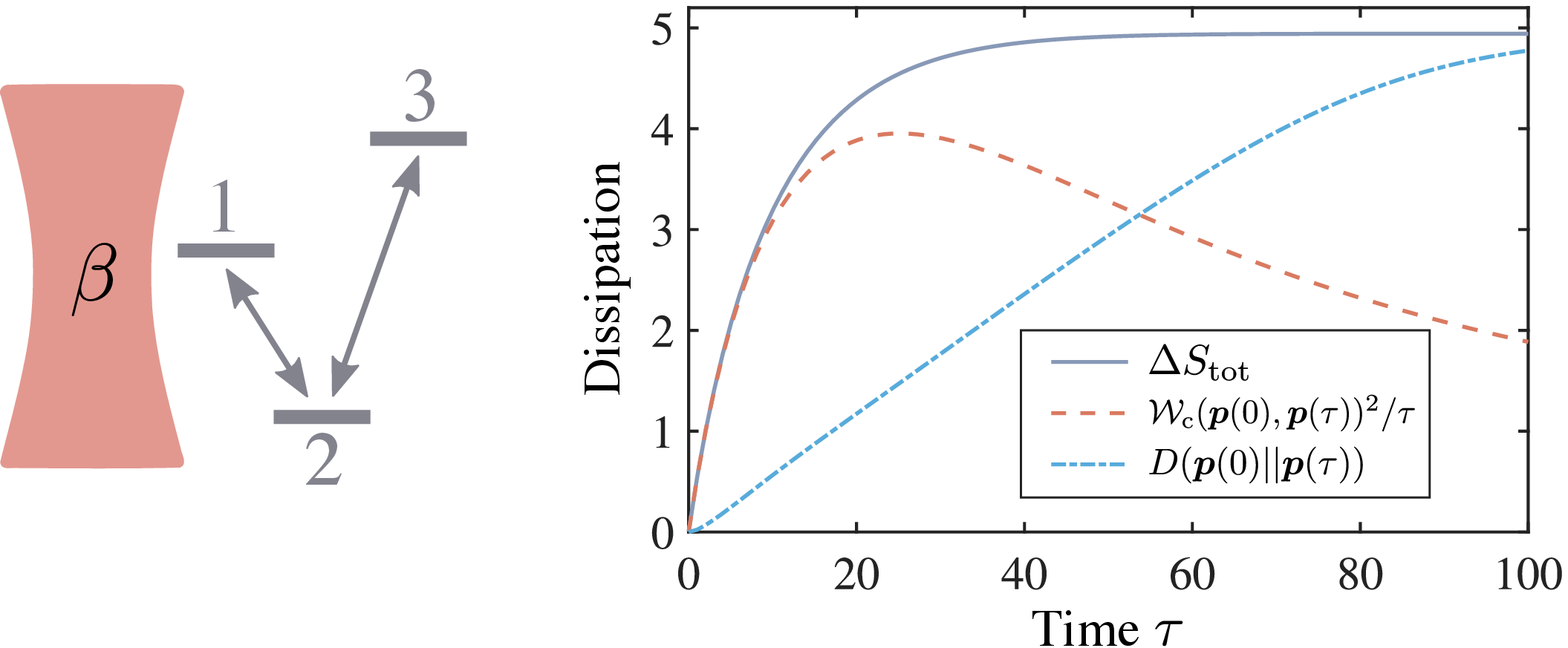}
	\protect\caption{Numerical verification of the derived bound. $\Delta S_{\rm tot}$ (solid line), $\mca{W}_{\rm c}(\mbm{p}(0),\mbm{p}(\tau))^2/\tau$ (dashed line), and $D(\mbm{p}(0)||\mbm{p}(\tau))$ (dash-dotted line) during the thermalization process of a three-level system. Parameters are set as $\beta=1,w_{12}=1,w_{23}=2,w_{13}=0,\mca{E}_1=3,\mca{E}_2=-0.5,\mca{E}_3=6$, and $\mbm{p}(0)=[0.1,0.1,0.8]^\top$.}\label{fig:three.level}
\end{figure}

%